\title{Resilient Information Aggregation}
\author{
Itai Arieli
\institute{Technion \\ Haifa, Israel}
\email{iarieli@technion.ac.il}
\and
Ivan Geffner
\institute{Technion \\ Haifa, Israel}
\email{ieg8@cornell.edu}
\and
Moshe Tennenholtz
\thanks{The work by Ivan Geffner and Moshe Tennenholtz was supported by funding from
the European Research Council (ERC) under the European Union’s Horizon 2020
research and innovation programme (grant agreement 740435).}
\institute{Technion \\ Haifa, Israel}
\email{moshet@technion.ac.il}
}
\newtheorem{theorem}{Theorem}
\newtheorem{proposition}{Proposition}
\newtheorem{definition}{Definition}
\newtheorem{corollary}{Corollary}
\newtheorem{lemma}{Lemma}
\newtheorem{example}{Example}
\newcommand{\commentout}[1]{}
\begin{document}
\maketitle

\begin{abstract}
In an information aggregation game, a set of senders interact with a receiver through a mediator. Each sender observes the state of the world and communicates a message to the mediator, who recommends an action to the receiver based on the messages received. The payoff of the senders and of the receiver depend on both the state of the world and the action selected by the receiver. This setting extends the celebrated cheap talk model in two aspects: there are many senders (as opposed to just one) and there is a mediator. From a practical perspective, this setting captures platforms in which strategic experts advice is aggregated in service of action recommendations to the user. We aim at finding an optimal mediator/platform that maximizes the users' welfare given highly resilient incentive compatibility requirements on the equilibrium selected: we want the platform to be incentive compatible for the receiver/user when selecting the recommended action, and we want it to be resilient against group deviations by the senders/experts. We provide highly positive answers to this challenge, manifested through efficient algorithms.
\end{abstract}

\section{Introduction}

Experts' opinions aggregation platforms are crucial for web monetizing. 
Indeed, in sites such as Reddit or Google, comments and reviews are aggregated as an answer to a user query about items observed or studied by others. We refer to these reviewers as \emph{experts}.
The platform can aggregate these experts' inputs or filter them when providing a recommendation to the user, which will later lead to a user action. An ideal platform should maximize the users' social welfare. In an economic setting, however, the different experts may have their own preferences. Needless to say, when commenting on a product or a service, we might not know if the expert prefers the user to buy the product or accept the service, or if the expert prefers otherwise. This is true even when all experts observe exactly the same characteristics of a product or service.  

Interestingly, while the study of economic platforms is rich \cite{Papadimitriou2022,khan2021disinformation,singh2021fulfillment,aouad2020online,olivares2020balancing,stouras2020prizes,derakhshan2022product,li2022interference,shi2021recommender}, there is no rigorous foundational and algorithmic setting for the study of aggregation and filtering of strategic experts opinions in service of the platform users. In this paper, we initiate such a study, which we believe to be essential. This study can be viewed as complementary to work on platform incentives \cite{Papadimitriou2022}, issues of dishonesty \cite{khan2021disinformation}, and issues of ranking/filtering \cite{derakhshan2022product}, by putting these ingredients in a concrete foundational economic setting dealing with recommendations based on 
inputs from strategic experts.
The model we offer extends the classical cheap talk model in two fundamental directions. First, by having several strategic senders (experts) rather than only one; second, by introducing a platform that acts as a mediator in an information design setting.

Our work is related to the literature on information design that studies optimal information disclosure policies for informed players. The two leading models of information design are cheap talk \cite{crawford1982strategic} and Bayesian persuasion \cite{kamenica2011bayesian}. The main distinction between
these models is the underlying assumption that, in the Bayesian persuasion model, the sender has commitment power in the way she discloses the information, while in the cheap talk model she has not.

Bayesian persuasion models emphasize commitment power, and while it may hold in some real-world situations, it is often considered strong. In addition, in Bayesian persuasion, the informed agent (the sender) is also the one who designs the information revelation policy. In practice, however, information revelation can be determined by other external or legal constraints. A commerce platform, for example, determines what information about a product is revealed to a potential customer based on information submitted by different suppliers.

In our model there is a finite state space of size $n$, several informed players (senders), an uninformed player (the receiver) that determines the outcome of the game by playing a binary action from the set $A := \{0,1\}$ (this could represent buying a product or not, passing a law or not, etc.), and a mediator that acts as a communication device between the senders and the receiver (the mediator can be seen as the platform used by all parties). The utility of each player is determined by the state and by the action played by the receiver. The incentives of the senders may not necessarily be aligned (e.g., senders can be a car seller and a technician that tested the car, two independent parties who studied the monetary value of law, two suppliers of a product, etc.).
The state is drawn from a prior distribution that is commonly known among players, but only the senders know its realized value. Thus, the senders' purpose is to reveal information to the receiver in such a way that the receiver plays the action that benefits them the most. Since the senders have no commitment power, we are interested in a mediated cheap talk equilibrium, in which it is never in the interest of the senders to be dishonest, and it is always in the interest of the receiver to play the action suggested by the protocol.

The most common notions of equilibrium, such as Nash equilibrium, require that each individual player cannot increase its utility by deviating from the proposed strategy. However, notions of equilibria that are resilient to group deviations are currently gaining traction \cite{aiyer2005bar, halpern2008beyond, aah2011},
in particular because of their Web applications. Indeed, on the Internet, it is not only fairly easy to collude, but it is also relatively simple to create proxy pseudo-identities and defect in a coordinated way (this is known as a Sybil attack~\cite{Douceur2002}). Nowadays, in Web applications and in distributed systems, resilience against individual deviations is generally considered insufficient for practical purposes. For instance, blockchain protocols are required to tolerate coordinated deviations from up to a fraction of their user base. In this work, we focus on $k$-resilient equilibria, which are strategies profiles in which no coalition of up to $k$ players can increase their utility by deviating.

Our main goal in the paper is to characterize, given the incentives of the senders and the receiver, which maps from states to distributions over actions result from playing $k$-resilient equilibria. More precisely, each cheap talk protocol $\vec{\sigma}$ induces a map $M$ from states to distributions over actions, where $M(\omega)$ is mapped to the distribution over actions resulting from playing $\vec{\sigma}$ in state $\omega$. Our aim is to characterize which of these maps (or \emph{outcomes}, as we call them) can be implemented by a $k$-resilient equilibrium, and to efficiently construct a concrete $k$-resilient equilibrium whenever a given outcome is implementable. We first show that, if there are more than two senders, even if one of them defects and misreports the state, a majority of the senders would still report the truth, and thus the mediator will always be able to compute the correct state. Therefore, if there are at least three senders, outcomes are implementable by a $1$-resilient equilibrium (i.e., a Nash equilibrium) if and only if they are incentive-compatible for the receiver. That is, an outcome is implementable  by a $1$-resilient equilibrium if and only if it improves the utility of the receiver relative to the case where no information is revealed to her. This result implies that the set of implementable distributions is independent of the utilities of the senders and only depends on that of the receiver, and thus that the senders have no bargaining power. It is also easy to check that this result  extends to the case of $k$-resilient equilibria for $k < n/2$, where $n$ is the number of senders. However, we show that if a majority of the players can collude, the set of implementable outcomes is defined by a system of linear equations that depend both on the utilities of the senders and the receiver. It may seem at first that computing such characterization may be highly inefficient since the number of possible coalitions of size at most $k \ge n/2$ grows exponentially over the number of players, and each of these possible coalitions imposes a constraint on the outcome. By contrast, our main result shows that, if the number of states is $m$, then the aforementioned linear system can be written with only $m^2$ inequality constraints, and all such inequalities can be computed in polynomial time over $m$ and the number of senders $n$. This means that the best receiver $k$-resilient equilibrium or the $k$-resilient equilibrium that maximizes social welfare can be computed efficiently. We also provide, given a solution of the system of equations, an efficient way to construct a concrete $k$-resilient equilibrium that implements the desired outcome.

Our results so far assume that all senders have full information about the realized state. However, in some cases it is realistic to assume that senders only have partial information about it and, moreover, that each sender's information might be different. We show in Section~\ref{sec:extended} that our techniques generalize to this model as long as the senders's preferences are not influenced by their coalition, a condition that we call \emph{$k$-separability}. This means that, assuming $k$-separability, we provide a characterization of all outcomes that are implementable by a $k$-resilient equilibrium, and an algorithm that construct a concrete $k$-resilient equilibrium that implements a desired (implementable) outcome. Both the characterization and the algorithm are efficient relative to the size of the game's description. 

\subsection{Related Work}

The literature on information design is too vast to address all the related work. We will therefore mention some key related papers. Krishna and Morgan \cite{krishna2001model} consider a setting similar to that considered by Crawford and Sobel \cite{crawford1982strategic}, where a real interval represents the set of states and actions. In this setting, the receiver's and the senders' utilities are \emph{biased} by some factor that affects their incentives and utility. Similarly to the current paper where the sender is not unique, Krishna and Morgan consider two informed senders that reveal information sequentially to the receiver. They consider the best receiver equilibrium and show that, when both senders are \emph{biased} in the same direction, it is never beneficial to consult both of them. By contrast, when senders are biased in opposite directions, it is always beneficial to consult them both.
 
In another work, Salamanca \cite{salamanca2021value} characterizes the optimal mediation for the sender in a sender-receiver game. Lipnowski and Ravid \cite{lipnowski2020cheap}, and 
 Kamenica and Gentzkow \cite{kamenica2011bayesian} provide a geometric characterization of the best cheap talk equilibrium for the sender under the assumption that the sender's utility is state-independent. The geometric characterization of Lipnowski and Ravid is no longer valid for the case where there are two or more senders. 

Kamenika and Gentzkow \cite{kamenica2017competition} consider a setting with two senders in a Bayesian persuasion model. The two senders, as in the standard Bayesian persuasion model (and unlike ours), have commitment power and they compete over information revelation.
The authors characterize the equilibrium outcomes in this setting.

In many game-theoretical works, mediators are incorporated into strategic settings \cite{aumann1987correlated,morgan1999models}. 
Kosenko \cite{kosenko2018mediated} also studied the information aggregation problem. However, their model assumed that the mediator had incentives of its own and selected its policy at the same time as the sender. Monderer and Tennenholtz \cite{monderer2009strong} studied the use of mediators to enhance the set of situations where coalition deviance is stable. They show that using mediators in several classes of settings can produce stable behaviors that are resistant to coalition deviations. 
In our setting, the existence of a $k$-resilient equilibrium is straightforward (e.g., playing a constant action). Instead, the strength of our result follows from efficiently characterising the set of all outcomes that are implementable using $k$-resilient mediated equilibria.

\section{Model}\label{sec:model}

In an information aggregation game $\Gamma = (S, A, \Omega, p, u)$, there is a finite set of possible states $\Omega = \{\omega^1, \ldots, \omega^m\}$, a commonly known distribution $p$ over $\Omega$, a set of possible actions $A = \{0,1\}$, a set $S = \{1,2, \ldots, n\}$ of senders, a receiver $r$, a mediator $d$, and a utility function $u : (S \cup \{r\}) \times \Omega \times A \longrightarrow \mathbb{R}$ such that $u(i, \omega, a)$ gives the utility of player $i$ when action $a$ is played at state $\omega$. Each information aggregation game instance is divided into four phases. In the first phase, a state $\omega$ is sampled from $\Omega$ following distribution $p$ and this state is disclosed to all senders $i \in S$. During the second phase, each sender $i$ sends a message $m_i$ to the mediator. In the third phase (after receiving a message from each sender) the mediator must send a message $m_d \in A$ to the receiver,  and in the last phase the receiver must play an action $a \in A$ and each player $i \in S \cup \{r\}$ receives $u(i, \omega, a)$ utility. 

The behavior of each player $i$ and is determined by its strategy $\sigma_i$ and the behavior of the mediator is determined by its strategy $\sigma_d$. A strategy $\sigma_i$ for a player $i \in S$ can be represented by a (possibly randomized) function $m_i: \Omega \longrightarrow \{0,1\}^*$ such that $m_i(\omega)$ indicates what message $i$ is sending to the mediator given state $\omega \in \Omega$. The strategy $\sigma_d$ of the mediator can be represented by a function $m_d: \left(\{0,1\}^*\right)^n \longrightarrow A$ that indicates, given the message received from each player, what message it should send to the receiver. The strategy $\sigma_r$ of the receiver can be represented by a function $a_r: A \rightarrow A$ that indicates which action it should play given the message received from the mediator.

In summary, a game instance goes as follows:
\begin{enumerate}
    \item A state $\omega$ is sampled from $\Omega$ following distribution $p$, and $\omega$ is disclosed to all senders $i \in S$.
    \item Each sender $i \in S$ sends message $m_i(\omega)$ to the mediator.
    \item The mediator sends message $m_d(m_1, \ldots, m_n)$ to the receiver.
    \item The receiver plays action $a_r(m_d)$ and each player $i \in S \cup \{r\}$ receives $u(i, \omega, a_r(m_d))$ utility.
\end{enumerate}

Note that, in order to simplify the notation, we use a slight notation overload since $m_i$ is both the message sent by player $i$ and a function that depends on the state. This is because the message sent by $i$ always depend on the state, even if it is not explicitly written. A similar situation happens with $a_r$.

\subsection{Game mechanisms}

Given a game $\Gamma = (S, A, \Omega, p, u)$, a \emph{mechanism} $M = (m_1, m_2, \ldots,\allowbreak m_n, m_d, a_r)$ uniquely determines a map $o_M^\Gamma : \Omega \longrightarrow \Delta A$ (where $\Delta A$ is the set of probability distributions of $A$) that maps each state $\omega$ to the distribution of actions obtained by playing $\Gamma$ when the senders, the mediator and the receiver play the strategies represented by the components of $M$. We say that $M$ \emph{implements} $o_M^\Gamma$ and that $o_M^\Gamma$ is the \emph{outcome} of $M$.

A mechanism $M$ is \emph{incentive-compatible} if it is not in the interest of the receiver or any of the senders to deviate from the proposed mechanism (note that the mediator has no incentives). We also say that $M$ is \emph{honest} if (a) $m_i \equiv Id_{\Omega}$, where $Id_{\Omega}(\omega) = \omega$ for all $\omega \in \Omega$, and (b) $a_r \equiv Id_A$. Moreover, we say that $M$ is \emph{truthful} if it is both honest and incentive-compatible. Intuitively, a mechanism is truthful if sending the true state to the mediator is a dominant strategy for the senders and playing the state suggested by the mediator is a dominant strategy for the receiver.

\begin{example}\label{ex:basic-ex}
Consider a game $\Gamma = (S, A, \Omega, p, u)$ where $S = \{1,2,3\}$, $A = \{0,1\}$, $\Omega = \{\omega_1, \ldots, \omega_k\}$, $p$ is the uniform distribution over $\Omega$ and $u : (S \cup \{r\}) \times \Omega \times A \longrightarrow \mathbb{R}$ is an arbitrary utility function. Consider the truthful mechanism in which senders disclose the true state to the mediator, the mediator chooses the state $\omega \in \Omega$ sent by the majority of the senders and sends to the receiver the action $a$ that maximizes $u(r, \omega, a)$, and the receiver plays the action sent by the mediator. It is easy to check that this mechanism is incentive-compatible: no individual sender can influence the outcome by deviating since the mediator chooses the state sent by the majority of the senders. Moreover, by construction, this mechanism gives the receiver the maximum possible utility among all mechanisms.  
\end{example}

Our first goal is to characterize the set of possible outcomes that can be implemented by truthful mechanisms. Note that, because of Myerson's revelation principle~\cite{myerson79}, characterizing the set of outcomes implemented by truthful mechanisms is the same as characterizing the set of outcomes implemented by any incentive-compatible mechanisms (not necessarily truthful):

\begin{proposition}
Let $\Gamma = (S, A, \Omega, p, u)$ be an information aggregation game. Then, for any incentive-compatible mechanism $M$ for $\Gamma$ there exists a truthful mechanism $M'$ such that $M'$ implements $o_M^\Gamma$.
\end{proposition}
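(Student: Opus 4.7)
The plan is a direct application of Myerson's revelation principle, implemented by having the new mediator internally simulate the original mechanism. Given an incentive-compatible mechanism $M = (m_1, \ldots, m_n, m_d, a_r)$, I construct $M' = (m_1', \ldots, m_n', m_d', a_r')$ by setting $m_i' := Id_\Omega$ for every sender $i$, $a_r' := Id_A$ for the receiver, and defining $m_d'$ so that, on receiving reported states $(\omega_1, \ldots, \omega_n)$, the new mediator draws fresh independent randomness to sample $m_i(\omega_i)$ for each $i$, feeds the resulting $n$-tuple into $m_d$, applies $a_r$ to the outcome, and sends the resulting action in $A$ to the receiver. By construction $M'$ is honest, and it remains to verify that it implements $o_M^\Gamma$ and is incentive-compatible.

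For the outcome, when every sender reports the true state $\omega$, the action played in $M'$ is distributed exactly as $a_r(m_d(m_1(\omega), \ldots, m_n(\omega)))$, which is by definition $o_M^\Gamma(\omega)$; hence $o_{M'}^\Gamma = o_M^\Gamma$. This is essentially just unrolling the definition of $m_d'$.

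The heart of the argument is the reduction of any profitable deviation in $M'$ to a profitable deviation in $M$. For a sender $i$ and any alternative reporting strategy $\tilde m_i': \Omega \to \Omega$, the joint distribution over the true state and the receiver's action produced when $i$ plays $\tilde m_i'$ in $M'$ (and everyone else follows $M'$) is identical to the distribution produced when $i$ deviates in $M$ to the strategy $\omega \mapsto m_i(\tilde m_i'(\omega))$ and everyone else follows $M$. Since $M$ is incentive-compatible, this cannot raise $i$'s expected utility, so truthful reporting is a best response for $i$ in $M'$. For the receiver, any deviation $\tilde a_r': A \to A$ in $M'$ plays $\tilde a_r'(a_r(m_d(\cdot)))$, which is equivalent to the receiver in $M$ using the composed strategy $\tilde a_r' \circ a_r$; again by incentive-compatibility of $M$, this is no improvement over $a_r$. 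Hence $a_r' = Id_A$ is a best response, and $M'$ is truthful.

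The only subtlety I anticipate is bookkeeping around randomization: since the $m_i$ and $m_d$ may be randomized, the simulation inside $m_d'$ must use fresh independent randomness for each simulated sender (and for the mediator) so that the joint law of simulated messages inside $M'$ matches the joint law of messages in $M$ under truthful reporting. Once this is spelled out, the equivalence between deviations in $M'$ and deviations in $M$ is exact, and no genuine obstacle remains.
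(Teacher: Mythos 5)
Your proposal is correct and follows essentially the same route as the paper: the mediator of the truthful mechanism internally simulates the senders' message functions, the original mediator, and the receiver's action function, and every deviation in $M'$ is mapped to an equivalent deviation in $M$, so incentive-compatibility carries over. The paper's proof is this exact construction (with the minor extra remark that off-domain messages are mapped to an arbitrary state), so no further comparison is needed.
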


\begin{proof}
Given $M = (m_1, m_2, \ldots,\allowbreak m_n, m_d, a)$, consider a mechanism $M' = (m_1', m_2', \ldots,\allowbreak m_n', m_d', a')$ such that $m_i' \equiv Id_\Omega$ for all $i \in S$, $m_a' \equiv Id_A$, and the mediator does the following. After receiving a message $\omega_j$ from each sender $j$, it computes $a(m_d(m_1(\omega_1),\allowbreak, m_2(\omega_2), \ldots, m_n(\omega_n)))$ and sends this action to the receiver (if the message from some player $j$ is inconsistent, the mediator takes $\omega_j$ to be an arbitrary element of $\Omega$). By construction, $M'$ is a truthful mechanism in which the mediator simulates everything the players would have sent or played with $M$. It is easy to check that, with $M'$, for any possible deviation for player $j \in S \cup \{r\}$, there exists a deviation for $j$ in $M$ that produces the same outcome. Thus, if $M$ is incentive-compatible, so is $M'$.
\end{proof}

This proposition shows that we can restrict our search to truthful mechanisms. Moreover, the construction used in the proof shows that we can assume without loss of generality that the senders can only send messages in $\Omega$ since sending any other message is equivalent to sending an arbitrary element of $\Omega$. To simplify future constructions, we'll use this assumption from now on.

\subsection{Resilient equilibria}

Traditionally, in the game theory and mechanism design literature, the focus has always been on devising strategies or mechanisms such that no individual agent is incentivized to deviate. However, in the context of multi-agent Bayesian persuasion, this approach is not very interesting. The reason is that, if $n > 2$, the mediator can always compute the true state by taking the one sent by a majority of the senders (as seen in Example~\ref{ex:basic-ex}), and thus the mediator can make a suggestion to the receiver as a function of the true state while individual senders cannot influence the outcome by deviating. In fact, given action $a \in A$, let $U_a := \mathbb{E}_{\omega \leftarrow p}[u(r, \omega, a)]$ be the expected utility of the receiver when playing action $a$ regardless of the mediator's suggestion and, given outcome $o^\Gamma$, let $$E_i(o^\Gamma) := \mathbb{E}_{\substack{\omega \leftarrow p, \\ a \leftarrow o^\Gamma(\omega)}}[u(i, \omega, a)]$$ be the expected utility of player $i \in S \cup \{r\}$ with outcome $o^\Gamma$. The following proposition characterizes outcomes implementable by truthful mechanisms.

\begin{proposition}\label{prop:no-coalitions}
If $\Gamma = (S, A, \Omega, p, u)$ is an information aggregation game with $|S| > 2$, an outcome $o^\Gamma: \Omega \longrightarrow \Delta A$ of $\Gamma$ is implementable by a truthful mechanism if and only if $E_r\left(o^\Gamma\right) \ge U_a$ for all $a \in A$.
\end{proposition}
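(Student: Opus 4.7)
The plan is to prove the two directions separately; in both, the fact that $|A| = 2$ limits the receiver's deterministic strategy space $A \to A$ to only four functions -- identity, constant $0$, constant $1$, and flip $x \mapsto 1 - x$ -- plays a central role.

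For the forward direction ($\Rightarrow$), suppose a truthful mechanism $M$ implements $o^\Gamma$. The receiver can always deviate to the constant strategy $a_r' \equiv a$, which ignores the mediator's message and plays $a$ in every state, yielding expected utility $U_a$. Since $M$ is incentive-compatible for the receiver, $E_r(o^\Gamma) \ge U_a$ for each $a \in A$.

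For the backward direction ($\Leftarrow$), I would build the honest mechanism suggested by Example~\ref{ex:basic-ex}: each sender reports the state truthfully; the mediator computes the majority-vote $\omega^*$ of these reports, samples an action $a \sim o^\Gamma(\omega^*)$, and sends $a$ to the receiver, who plays it. Since $|S| > 2$, no single sender can flip the majority alone, so any unilateral deviation by a sender leaves the action distribution unchanged; honesty is thus a weak best response for every sender, regardless of their utility. By construction $o_M^\Gamma = o^\Gamma$, so it only remains to verify the receiver's incentive-compatibility.

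Identity gives the receiver utility $E_r(o^\Gamma)$, and the two constant strategies give $U_0$ and $U_1$, both bounded above by identity under our hypothesis. For the flip $x \mapsto 1 - x$, the played action in state $\omega$ is distributed as $1 - o^\Gamma(\omega)$, and using the pointwise identity $u(r, \omega, a) + u(r, \omega, 1-a) = u(r, \omega, 0) + u(r, \omega, 1)$, the flip yields expected utility $U_0 + U_1 - E_r(o^\Gamma)$. The hypothesis then implies $E_r(o^\Gamma) \ge \max(U_0, U_1) \ge (U_0 + U_1)/2 \ge U_0 + U_1 - E_r(o^\Gamma)$, so identity weakly dominates the flip as well; by linearity of expectation it dominates any randomized deviation. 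I expect the observation that the flip deviation is ruled out \emph{for free} by the two constant-deviation inequalities to be the only non-obvious step in the argument.
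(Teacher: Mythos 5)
Your proof is correct and follows essentially the same route as the paper: the forward direction via the constant-action deviations, and the backward direction via the same majority-vote honest mechanism, with the receiver's incentive-compatibility verified as in the paper's Lemma~\ref{ref:lemma-receiver-compatibility}. The only cosmetic difference is that you dispose of the flip deviation by computing its payoff explicitly as $U_0 + U_1 - E_r\left(o^\Gamma\right)$, whereas the paper argues that any profitable receiver deviation would make some constant deviation profitable by conditioning on the mediator's message; both arguments are valid.
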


Intuitively, proposition~\ref{prop:no-coalitions} states that, if there are at least three senders, the only condition for an outcome to be implementable by a truthful incentive-compatible mechanism is that the receiver gets a better expected utility than the one it gets with no information.  Before proving it, we need the following lemma, which will also be useful for later results.

\begin{lemma}\label{ref:lemma-receiver-compatibility}
Let $\Gamma = (S, A, \Omega, p, u)$ be an information aggregation game. An honest mechanism $M = (Id_\Omega, \ldots, Id_\Omega, m_d, Id_A)$ for $\Gamma$ is incentive-compatible for the receiver if and only if $E_r\left(o_M^\Gamma\right) \ge U_a$ for all $a \in A$.
\end{lemma}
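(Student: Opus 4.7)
The plan is to enumerate the receiver's pure deviations directly, exploiting the fact that $|A| = 2$ makes the strategy space of the receiver very small. Since $a_r : A \to A$ and $A = \{0,1\}$, there are only four pure receiver strategies: the identity $Id_A$, the two constants, and the ``flip'' that swaps $0$ and $1$. Any randomized deviation is a convex combination of these, so by linearity of expectation it suffices to compare $Id_A$ against each of the three non-identity pure deviations.

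For the ``only if'' direction, a deviation by the receiver to the constant-$a$ strategy ignores the mediator's message entirely and therefore yields expected utility exactly $U_a = \mathbb{E}_{\omega \leftarrow p}[u(r,\omega,a)]$. If $M$ is incentive-compatible for the receiver, then $E_r(o_M^\Gamma)$ must weakly exceed this value for each $a \in A$, which is precisely the stated inequality.

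For the ``if'' direction, the hypothesis $E_r(o_M^\Gamma) \ge U_a$ for $a \in \{0,1\}$ rules out both constant deviations immediately, so only the flip remains. The key observation is that, conditional on any state $\omega$ and any realized mediator message, the identity and the flip together select each of the two actions exactly once; summing the resulting utilities over the randomness of $\omega$ and of the mediator yields the identity
\[ E_r(o_M^\Gamma) + U_{\mathrm{flip}} = U_0 + U_1, \]
where $U_{\mathrm{flip}}$ denotes the receiver's expected utility under the flip deviation. Combined with $U_0, U_1 \le E_r(o_M^\Gamma)$, which gives $U_0 + U_1 \le 2 E_r(o_M^\Gamma)$, this yields $U_{\mathrm{flip}} \le E_r(o_M^\Gamma)$, ruling out the last deviation.

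I do not anticipate a serious obstacle: the argument is essentially a finite case analysis enabled by the binary action structure, and the sum identity above does the only nontrivial work. The one subtlety to be careful about is legitimately reducing to pure receiver deviations (by appealing to linearity of expectation), so that only four strategies need be considered.
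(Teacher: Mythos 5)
Your proposal is correct and follows essentially the same route as the paper: both arguments exploit the binary action structure to reduce receiver incentive compatibility to a comparison with the two constant strategies, which is exactly what the hypothesis $E_r\left(o_M^\Gamma\right) \ge U_a$ for $a \in \{0,1\}$ rules out. The only cosmetic difference is how the message-swapping deviation is dismissed: you use the exact identity $E_r\left(o_M^\Gamma\right) + U_{\mathrm{flip}} = U_0 + U_1$, while the paper decomposes any profitable deviation message-by-message and notes that a strict gain on either message already makes one of the constant deviations profitable.
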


\begin{proof}
$(\Longrightarrow)$ Let $M$ be an honest mechanism for $\Gamma$ that is incentive-compatible for the receiver. Then, if $E_r\left(o_M^\Gamma\right) < U_a$ for some $a \in A$, the receiver can increase its utility ignoring the mediator's suggestion and playing always action $a$. This would contradict the fact that $M$ is incentive-compatible.

$(\Longleftarrow)$ Suppose that $E_r\left(o_M^\Gamma\right) \ge U_a$ for all $a \in A$. If $M$ is not incentive-compatible, it means that the receiver can strictly increase its payoff either (a) by playing $1$ when the mediator sends $0$ and/or (b) playing $0$ when the mediator sends $1$. Suppose that (a) is true, then the receiver can strictly increase its payoff by playing $1$ in all scenarios, which would contradict the fact that its expected payoff with $M$ is greater or equal than $U_1$. The argument for (b) is analogous.
\end{proof}

With this we can prove Proposition~\ref{prop:no-coalitions}. The mechanism used in the proof is very similar to the one in Example~\ref{ex:basic-ex}.

\begin{proof}[Proof of Proposition~\ref{prop:no-coalitions}]
Let $M$ be a truthful mechanism. Then, by Lemma~\ref{ref:lemma-receiver-compatibility}, $o_m^\Gamma$ satisfies that $E_r\left(o_M^\Gamma\right) \ge U_a$ for all $a \in A$.

Conversely, suppose that an outcome $o^\Gamma$ satisfies that $E_r\left(o_M^\Gamma\right) \ge U_a$ for all $a \in A$. Consider a mechanism $M = (Id_\Omega, \ldots, Id_\Omega, m_d, Id_A)$ such that the mediator takes the state $\omega$ sent by the majority of the senders and sends $o^\Gamma(\omega)$ to the receiver. By construction, $M$ implements $o^\Gamma$. Moreover, as in Example~\ref{ex:basic-ex}, $M$ is incentive-compatible for the senders since, if $n > 2$, they cannot influence the outcome by individual deviations. By Lemma~\ref{ref:lemma-receiver-compatibility} $M$ is also incentive-compatible for the receiver. Thus, $M$ is a truthful mechanism that implements $o^\Gamma$.
\end{proof}

The construction used in the proof shows how easily we can implement any desired outcome as long as it is better for the sender than playing a constant action. However, Proposition~\ref{prop:no-coalitions} is only valid under the assumption that senders cannot collude and deviate in a coordinated way (an assumption that many times is unrealistic, as pointed out in the introduction). If we remove this assumption, the \emph{next best thing} is to devise mechanisms such that all coalitions up to a certain size do not get additional utility by deviating. We focus mainly on the following notions of equilibrium:

\begin{definition}[\cite{ADGH06}]
Let $\Gamma$ be any type of game for $n$ players with strategy space $A = A_1 \times \ldots \times A_n$ and functions $u_i: S \longrightarrow \mathbb{R}$ that give the expected utility of player $i$ when players play a given strategy profile. Then,
\begin{itemize}
    \item A strategy profile $\vec{\sigma} \in A$ is a \emph{$k$-resilient Nash equilibrium} if, for all coalitions $K$ up to $k$ players and all strategy profiles $\vec{\tau}_K$ for players in $K$, $u_i(\vec{\sigma}) \ge u_i(\vec{\sigma}_{-K}, \vec{\tau}_K)$ for some $i \in K$.
    \item A strategy profile $\vec{\sigma} \in A$ is a \emph{strong $k$-resilient Nash equilibrium} if, for all coalitions $K$ up to $k$ players and all strategy profiles $\vec{\tau}_K$ for players in $K$, $u_i(\vec{\sigma}) \ge u_i(\vec{\sigma}_{-K}, \vec{\tau}_K)$ for all $i \in K$.
\end{itemize}
\end{definition}

Intuitively, a strategy profile is $k$-resilient if no coalition of up to $k$ players can deviate in such a way that all members of the coalition strictly increase their utility, and a strategy profile is strongly $k$-resilient if no member of any coalition of up to $k$ players can strictly increase its utility by deviating, even at the expense of the utility of other members of the coalition. We can construct analogous definitions in the context of information aggregation:

\begin{definition}
Let $\Gamma = (S, A, \Omega, p, u)$ be an information aggregation game. A mechanism $M = (m_1, \ldots, \allowbreak m_n,\allowbreak m_d, a_r)$ for $\Gamma$ is $k$-resilient incentive-compatible (resp., strong $k$-resilient incentive-compatible) if 
\begin{itemize}
\item[(a)] The receiver cannot increase its utility by deviating from the proposed protocol.
\item[(b)] Fixing $m_d$ and $a_r$ beforehand, the strategy profile of the senders determined by $M$ is a $k$-resilient Nash equilibrium (resp., strong $k$-resilient Nash equilibrium).
\end{itemize}
\end{definition}

A mechanism $M$ is $k$-resilient truthful if it is honest and $k$-resilient incentive-compatible. Strong $k$-resilient truthfulness is defined analogously.

\section{Main Results}\label{sec:results}

For the main results of this paper we need the following notation. Given an outcome $o: \Omega \rightarrow \Delta A$, we define by $o^*: \Omega \rightarrow [0,1]$ the function that maps each state $\omega$ to the probability that $o(\omega) = 0$. Note that, since $|A| = 2$, $o$ is uniquely determined by $o^*$. The following theorem gives a high level characterization of all $k$-resilient truthful mechanisms (resp., strong $k$-resilient truthful mechanisms).

\begin{theorem}\label{thm:high-level-char}
Let $\Gamma = (S, A, \Omega, p, u)$ be an information aggregation game with $\Omega = \{\omega^1, \ldots, \omega^m\}$. Then, there exists a system $E$ of $O(m^2)$ equations over variables $x_1, \ldots, x_m$, such that each equation of $E$ is of the form $x_i \le x_j$ for some $i,j \in [m]$, and such that an outcome $o$ of $\Gamma$ is implementable by a $k$-resilient truthful mechanism (resp., strong $k$-resilient truthful mechanism) if and only if 

\begin{itemize}
\item[(a)] $x_1 = o^*(\omega^1), \ldots, x_m = o^*(\omega^m)$ is a solution of $E$.
\item[(b)] $E_r\left(o\right) \ge U_a$ for all $a \in A$.
\end{itemize}

Moreover, the equations of $E$ can be computed in polynomial time over $m$ and the number of senders $n$.
\end{theorem}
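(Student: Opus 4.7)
The plan is to read the system $E$ directly off the utility data $\delta_\ell^i := u(\ell,\omega^i,0)-u(\ell,\omega^i,1)$. For each ordered pair $(\omega^i,\omega^j)$, observe that if a majority coalition of size at most $k$ forces the mediator to treat the true state $\omega^i$ as $\omega^j$, then a sender $\ell$ in the coalition experiences an expected-utility change proportional to $(o^*(\omega^j)-o^*(\omega^i))\,\delta_\ell^i$. In the strong-resilient variant, any single $\ell$ with $\delta_\ell^i>0$ (resp.\ $<0$) forces the inequality $x_j\le x_i$ (resp.\ $x_i\le x_j$) into $E$. In the weak-resilient variant, the same inequality is imposed only when the set $\{\ell:\delta_\ell^i>0\}$ (resp.\ $\{\ell:\delta_\ell^i<0\}$) is large enough to contain an entire coalition of size in $[\lfloor n/2\rfloor+1,\,k]$. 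Either way $E$ has at most $2m(m-1)=O(m^2)$ inequalities, and a single pass over the signs of the $nm$ differences $\delta_\ell^i$ suffices to generate all of them, giving the $O(nm^2)$ time bound.

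For the necessity direction I would argue that whenever $|K|\in[\lfloor n/2\rfloor+1,\,k]$ and all members of $K$ report $\omega^j$ at the true state $\omega^i$, the mediator sees a message pattern that is indistinguishable from its ``dual'' in which the true state is actually $\omega^j$ and a minority of $S\setminus K$ is defecting; hence its output on that pattern must coincide with the honest outcome at $\omega^j$, namely $o(\omega^j)$. Resilience then forbids this deviation from being strictly profitable for the coalition as a whole, which pointwise yields exactly the inequalities placed in $E$; condition~(b) is immediate from Lemma~\ref{ref:lemma-receiver-compatibility} applied to the receiver. For sufficiency, given an outcome $o$ satisfying (a) and (b), I would construct the mediator that, upon receiving $(\omega^{j_1},\dots,\omega^{j_n})$, identifies the strict-majority state $\omega$ (when one exists) and samples an action from $o(\omega)$, with a carefully chosen response on ambiguous patterns; honesty, Lemma~\ref{ref:lemma-receiver-compatibility} and condition~(b) together handle the receiver, and (a) rules out every single-state coalition deviation by construction.

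The main obstacle is handling \emph{multi-state} coalition deviations $f:\Omega\to\Omega$, which could in principle combine individually innocuous single-state swaps into a joint profit for every coalition member. In the strong case the difficulty dissolves: the constraints in $E$ make each term $p(\omega^i)(o^*(\omega^{f(i)})-o^*(\omega^i))\delta_\ell^i$ non-positive for each fixed $\ell$, and summing preserves the sign. In the weak case the argument is more delicate, since aggregate profits could arise from per-state contributions of mixed sign; here I would exploit the mediator's freedom on ambiguous message patterns to pick a response that cancels such combined gains, and would show that the conditions in $E$ are exactly the feasibility requirement for such a response to exist. Converting this apparently combinatorial multi-state condition into the pointwise pairwise system $E$ is where I expect the bulk of the technical work to lie.
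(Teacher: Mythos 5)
There is a genuine gap: the system $E$ you construct is not the correct one, and your necessity argument rests on a false claim. In the paper, an inequality $x_i\le x_j$ is generated not by a single strict-majority coalition judged only by its preferences at the source state, but by a \emph{pair} of coalitions of size at most $k$ covering $S$: one, all of whose members strictly prefer $1$ at $\omega^i$, moves the pure input $\vec{\omega}^i$ to a mixed input $\vec{m}$, while the other, all of whose members strictly prefer $0$ at $\omega^j$, moves $\vec{\omega}^j$ to the same $\vec{m}$; resilience then sandwiches $o^*(\omega^i)\le m_d^*(\vec{m})\le o^*(\omega^j)$. This only needs $2k\ge n$ (not $k>n/2$), it couples the preferences at \emph{both} states, and deciding whether such an intermediate $\vec{m}$ exists is exactly the polynomial-time step (the partition test $X_{0,1}^{1,0}=\emptyset$, $|X_{0,1}^{0,1}|\le k$, $|X_{1,0}^{1,0}|\le k$). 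Your rule, which fires only when $\{\ell:\delta_\ell^i>0\}$ (or $<0$) contains a coalition of size at least $\lfloor n/2\rfloor+1$, is both incomplete and unsound. Incomplete: in the paper's own example ($n=4$, $k=2$; senders $1,2,3$ prefer $0$ in $\omega^2$, senders $2,3,4$ prefer $1$ in $\omega^1$) your size range $[\lfloor n/2\rfloor+1,k]$ is empty, so your $E$ is empty, yet $o^*(\omega^1)\le o^*(\omega^2)$ is necessary, via coalition $\{3,4\}$ at $\omega^1$ and coalition $\{1,2\}$ at $\omega^2$ both steering to $(\omega^1,\omega^1,\omega^2,\omega^2)$. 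Unsound: with $n=3$, $k=2$, senders $1,2$ strictly preferring $0$ at $\omega^1$ and everyone indifferent at $\omega^2$, your rule imposes $o^*(\omega^2)\le o^*(\omega^1)$, which is not necessary, because the mediator may answer the split input $(\omega^2,\omega^2,\omega^1)$ (and any input in which only senders $1,2$ deviate from $\vec{\omega}^1$) with $o(\omega^1)$ rather than with the majority report, after which no coalition of size at most $2$ can profit. This is precisely where your necessity argument breaks: indistinguishability does \emph{not} force the mediator's response on an ambiguous pattern to equal $o(\omega^j)$; the values of $m_d^*$ on non-pure inputs are free variables, and the paper's sufficiency proof exploits exactly that freedom (pure inputs get $o^*(\omega)$; other inputs get $0$, $1$, or an average over their $\prec_k$-neighbours, with monotonicity w.r.t.\ $\prec_k$ verified via Lemma~\ref{lem:inequality}). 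The strong variant has the analogous structure (a partition with $n\le 2k$ and one strict preference on each side), so your sign-of-a-single-$\delta_\ell^i$ rule is off there as well.

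On your closing point: the worry about multi-state coalition deviations is the one place where you probe something the paper does not discuss --- Lemma~\ref{lem:inequality} evaluates coalition deviations state by state, and your remark that the strong case decomposes cleanly is correct, while for weak $k$-resilience under an ex-ante reading of the definition mixed-sign per-state contributions are indeed a real subtlety. But you neither need nor resolve it: the characterization itself is driven by single-state deviations, and your sketch of ``choosing responses on ambiguous patterns to cancel combined gains'' is not developed into an argument, nor can it be, given that the pairwise system it is supposed to certify is already wrong. As it stands, the proposal fails at the definition of $E$, before the multi-state issue arises.
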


Note that condition (b) is identical to the one that appears in Lemma~\ref{ref:lemma-receiver-compatibility}. In fact, condition (b) is the necessary and sufficient condition for a mechanism that implements $o$ to be incentive-compatible for the receiver, and condition (a) is the necessary and sufficient condition for this mechanism to be $k$-resilient incentive-compatible (resp., strong $k$-resilient incentive-compatible) for the senders.
Theorem~\ref{thm:high-level-char} shows that the set of outcomes implementable by $k$-resilient truthful mechanisms (resp., strong $k$-resilient truthful mechanisms) is precisely the set of solutions of a system of equations over $\{o^*(\omega^i)\}_{i \in [m]}$. This means that the solution that maximizes any linear function over $\{o^*(\omega^i)\}_{i \in [m]}$ can be reduced to an instance of linear programming. In particular, the best implementable outcome for the receiver or for each of the senders can be computed efficiently.

\begin{corollary}
There exists a polynomial time algorithm that computes the outcome that could be implemented by a $k$-resilient truthful mechanism (resp., strong $k$-resilient truthful mechanism) that gives the most utility to the receiver or that gives the most utility to a particular sender.
\end{corollary}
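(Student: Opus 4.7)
The plan is to reduce the maximization problem to a polynomial-size linear program and invoke a standard polynomial-time LP solver.

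First, I would observe that for any player $i \in S \cup \{r\}$, the expected utility
\[
E_i(o) \;=\; \sum_{j=1}^m p(\omega^j)\bigl(x_j\, u(i,\omega^j,0) + (1-x_j)\, u(i,\omega^j,1)\bigr)
\]
is a linear function of the variables $x_j := o^*(\omega^j)$, and its coefficients can be read off from the game description in time polynomial in $m$ and $n$. In particular, the same holds both for the objective (which is $E_r(o)$ for the receiver version, or $E_i(o)$ for a fixed sender $i$) and for the two constraints $E_r(o) \ge U_a$ appearing in Theorem~\ref{thm:high-level-char}(b); there are only two such constraints because $|A|=2$.

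Next, I would invoke Theorem~\ref{thm:high-level-char} to compute the system $E$ of $O(m^2)$ constraints of the form $x_i \le x_j$ in polynomial time. Combining $E$ with the two receiver-compatibility constraints from (b) and the box constraints $0 \le x_j \le 1$, I obtain a linear program in $m$ variables and $O(m^2)$ constraints whose feasible region is, by Theorem~\ref{thm:high-level-char}, exactly the set of outcomes implementable by a $k$-resilient (resp.\ strong $k$-resilient) truthful mechanism. Running any polynomial-time LP algorithm (e.g., the ellipsoid method or an interior-point method) on this instance returns an optimizer $(x_1^*, \ldots, x_m^*)$, and the corresponding outcome $o^*$ defined by $o^*(\omega^j) = x_j^*$ is the desired optimum.

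There is no serious obstacle here. The only points to verify are that the objective and all constraints are linear in $x_1, \ldots, x_m$ (immediate from the formula above together with the fact that every inequality in $E$ has the form $x_i \le x_j$), and that the whole LP instance has size polynomial in $m$ and $n$, which is guaranteed by Theorem~\ref{thm:high-level-char}. The strong $k$-resilient variant follows by the exact same argument applied to the strong version of the characterization provided by that theorem.
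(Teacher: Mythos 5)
Your proposal is correct and follows essentially the same route as the paper: the text after Theorem~\ref{thm:high-level-char} derives the corollary precisely by noting that the implementable outcomes are the solutions of the polynomially computable linear system $E$ together with condition (b), so maximizing the (linear) expected utility of the receiver or of a given sender is a polynomial-size linear program. Your write-up merely makes explicit the linearity of $E_i(o)$ in the variables $x_j = o^*(\omega^j)$ and the invocation of a polynomial-time LP solver, which is exactly what the paper intends.
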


Our last result states that not only we can characterize the outcomes implementable by truthful mechanisms, but that we can also efficiently compute a truthful mechanism that implements a particular outcome. 
Before stating this formally, it is important to note that all truthful mechanisms can be encoded by a single function $m_d^*$ from message profiles $\vec{m} = (m_1, \ldots, m_n)$ to $[0,1]$. Intuitively, the mechanism $m_d$ defined by $m_d^*$ is the one that maps $(\vec{m})$ to the distribution such that $0$ has probability $m_d^*(\vec{m})$ and $1$ has probability $1 - m_d^*(\vec{m})$. Moreover, note that the description of a $k$-resilient truthful mechanism for a game with $m$ possible states is exponential over $k$ since the mechanism must describe what to do if $k$ players misreport their state, which means that the mechanism should be defined over at least $m^k$ inputs. Clearly, no polynomial algorithm over $n$ and $m$ can compute this mechanism just because of the sheer size of the output. However, given a game $\Gamma$ and an output $o$, it is not necessary to compute the whole description of the resilient truthful mechanism $m_d^*$ for $\Gamma$ that implements $o$, we only need to be able to compute $m_d^*(\vec{m})$ in polynomial time for each possible message profile $\vec{m}$. We state this as follows.

\begin{theorem}\label{thm:construction}
There exists an algorithm $\pi$ that receives as input the description of an information aggregation game  $\Gamma = (S, A, \Omega, p, u)$, an outcome $o$ for $\Gamma$ implementable by a $k$-resilient  mechanism (resp., strong $k$-resilient mechanism), and a message input $\vec{m}$ for the mediator, and $\pi$ outputs a value $q \in [0,1]$ such that the function $m_d^*$ defined by $m_d^*(\vec{m}) := A(\Gamma, o, \vec{m})$ determines a $k$-resilient truthful mechanism (resp., strong $k$-resilient truthful mechanism) for $\Gamma$ that implements $o$. Moreover, $\pi$ runs in polynomial time over $|\Omega|$ and $|S|$.
\end{theorem}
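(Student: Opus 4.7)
The plan is to leverage Theorem~\ref{thm:high-level-char}: implementability of $o$ means that the values $o^*(\omega^1), \ldots, o^*(\omega^m)$ satisfy a polynomial-size system $E$ of pairwise inequalities $x_i \le x_j$, and this system compactly encodes all coalition-incentive constraints for coalitions of size at most $k$. Since $E$ can be computed in polynomial time, I would design $\pi$ so that, for each query profile $\vec{m}$, the value $m_d^*(\vec{m})$ is determined locally from $\vec{m}$, from $E$, and from the precomputed values $o^*(\omega^1), \ldots, o^*(\omega^m)$.

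First I would define, for each message profile $\vec{m} = (m_1, \ldots, m_n)$, the set of \emph{candidate} true states: those $\omega^i \in \Omega$ reported by at least $n-k$ of the senders. The key observation is that if at most $k$ senders deviate, the true state must be a candidate; in particular, on honest play in state $\omega$, the unique candidate is $\omega$ itself, so any rule that returns $o^*(\omega^i)$ when $\omega^i$ is the unique candidate automatically implements $o$. When $k < n/2$, the candidate is always unique and the algorithm simply outputs $o^*$ of that candidate; the difficulty arises only when $k \ge n/2$ and multiple candidates may coexist.

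For inputs $\vec{m}$ with several candidates, I would output a convex combination of the values $\{o^*(\omega^i) : \omega^i \text{ is a candidate for } \vec{m}\}$, where the weights are determined by a canonical rule driven by the inequalities of $E$. The rule I have in mind follows the construction implicit in the proof of Theorem~\ref{thm:high-level-char}, which associates each pairwise inequality $x_i \le x_j$ in $E$ with a specific family of deviations that exchanges state $\omega^i$ with $\omega^j$; the weights are then chosen so that, for any coalition $K$ of size at most $k$, any true state $\omega$, and any deviation producing $\vec{m}$, the output $m_d^*(\vec{m})$ lies on the appropriate side of $o^*(\omega)$ to ensure that at least one (resp., every) member of $K$ weakly prefers honest reporting. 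This reduces the correctness of the mechanism precisely to the $E$-inequalities that $o^*$ is already assumed to satisfy.

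The main obstacle is global consistency: the same local rule, applied uniformly to all message profiles, must yield a coherent $k$-resilient mechanism rather than one that is locally safe on each $\vec{m}$ but globally manipulable by a clever choice of deviation. I would address this by showing that the rule depends only on the multiset of reports in $\vec{m}$ and on the values $o^*(\omega^i)$ for candidate states, so that every coalition deviation decomposes into a chain of pairwise state-swaps governed by inequalities in $E$; a telescoping argument then bounds the coalition's gain by the sum of the pairwise constraints, all of which are non-positive by assumption. Polynomial-time computability of $\pi$ is then immediate: identifying the candidate set takes $O(nm)$ time, and selecting the canonical rule and evaluating the combination takes $O(m^2)$ time once $E$ has been precomputed via Theorem~\ref{thm:high-level-char}.
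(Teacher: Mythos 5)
Your high-level architecture is the same as the paper's: define $m_d^*$ locally, input by input, setting $m_d^*(\vec{\omega}) := o^*(\omega)$ on pure inputs and, on any other input $\vec{m}$, choosing a value compatible with the constraints $m_d^*(\vec{m}) \le o^*(\omega')$ whenever $\vec{m} \prec_k \vec{\omega}'$ and $m_d^*(\vec{m}) \ge o^*(\omega)$ whenever $\vec{\omega} \prec_k \vec{m}$; by Lemma~\ref{lem:inequality} this is all that sender $k$-resilience requires, and all $\prec_k$-neighbours of a non-pure input are pure, so the check is polynomial. But your write-up leaves the crux unproved: you never exhibit the weights. Saying the convex combination is ``chosen so that the output lies on the appropriate side of $o^*(\omega)$'' is exactly the statement that needs an argument --- one must show that the interval of admissible values for each $\vec{m}$ is nonempty, and that is where the hypothesis that $o^*$ solves $E$ enters, via the chains $\vec{\omega} \prec_k \vec{m} \prec_k \vec{\omega}'$ of Lemma~\ref{lem:simplification} which give $\omega \le_k \omega'$ and hence $o^*(\omega) \le o^*(\omega')$. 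The paper does this concretely in Proposition~\ref{prop:inequalities}: take the midpoint of the largest lower pure neighbour and the smallest upper pure neighbour (and $0$ or $1$ when one of the two sets is empty). Also, the ``telescoping over chains of pairwise swaps'' is not needed: by Lemma~\ref{lem:inequality} a profitable coalition deviation is a \emph{single} $\prec_k$-comparison between the honest pure input and the deviated input, so checking each non-pure input against its pure neighbours already gives global correctness.

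The more serious flaw is your consistency step, which rests on the claim that the rule ``depends only on the multiset of reports in $\vec{m}$.'' That cannot work in general: the incentive constraints attached to $\vec{m}$ depend on \emph{which} senders would have had to deviate, through their state-dependent preferences, not merely on how many report each state. For instance, take two senders, $k=1$, two states, with sender $1$ strictly preferring action $0$ in $\omega^1$ and action $1$ in $\omega^2$, and sender $2$ strictly preferring action $1$ in $\omega^1$ and action $0$ in $\omega^2$ (and a receiver who is indifferent). Then the input $(\omega^1,\omega^2)$ must receive a value at least $\max\{o^*(\omega^1),o^*(\omega^2)\}$, while $(\omega^2,\omega^1)$ must receive a value at most $\min\{o^*(\omega^1),o^*(\omega^2)\}$; yet no chain forces $\omega^1 \le_1 \omega^2$ or $\omega^2 \le_1 \omega^1$, so $E$ is vacuous and outcomes with $o^*(\omega^1) \ne o^*(\omega^2)$ are implementable. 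For such an outcome any multiset-invariant rule is infeasible, whereas the paper's sender-dependent rule handles it (it assigns $1$ to the first input and $0$ to the second). So your algorithm must keep track of the identities of the potentially deviating senders, and the proof of correctness must be the feasibility argument sketched above rather than multiset symmetry plus telescoping. A final minor point: inputs with no candidate state (possible when reports are badly split) still need a default value, which is harmless but should be specified.
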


The proofs of Theorems~\ref{thm:high-level-char} and \ref{thm:construction} are detailed in Sections~\ref{sec:proofs} and \ref{sec:proof-construction} respectively. Intuitively, each coalition imposes a constraint over the space of possible messages that the mediator may receive, implying that the mediator should suggest action $0$ more often for some message inputs than others. These constraints induce a partial order over \emph{pure inputs} (i.e., messages such that all senders report the same state), which is precisely the order defined by $E$ in Theorem~\ref{thm:high-level-char}. It can be shown that, even though there may be exponentially many possible coalitions of size at most $k$, this partial order can be computed in polynomial time over the number of states and senders.

\section{Proof of Theorem~\ref{thm:high-level-char}}\label{sec:proofs}

In this section we prove Theorem~\ref{thm:high-level-char}. Note that, because of Lemma~\ref{ref:lemma-receiver-compatibility}, we only have to show that, given a game $\Gamma = (S, A, \Omega, p, u)$ with $|\Omega| = m$ and $|S| = n$, there exists a system of equations $E$ as in Theorem~\ref{thm:high-level-char} such that an outcome $o$ is implementable by an honest mechanism that is $k$-resilient incentive-compatible (resp., strong $k$-resilient) for the senders if and only if $(o^*(\omega^1), \ldots, o^*(\omega^m))$ is a solution of $E$.

To understand the key idea, let us start with an example in which $\Omega = \{\omega^1, \omega^2\}$, $S = \{1,2,3,4\}$, senders $1,2$ and $3$ prefer action $0$ in $\omega^2$, senders $2,3$ and $4$ prefer action $1$ in $\omega^1$, and in which we are trying to characterize all outcomes that could be implemented by a mechanism that is $2$-resilient incentive-compatible for the senders. If all senders are honest, then the mediator could only receive inputs $(\omega^1,\omega^1,\omega^1,\omega^1)$ or $(\omega^2,\omega^2,\omega^2,\omega^2)$ (where the $i$th component of an input represents the message sent by sender $i$). However, since senders could in principle deviate, the mediator could receive, for instance, an input of the form $(\omega^1,\omega^1,\omega^2,\omega^2)$. This input could originate in two ways, either the true state is $\omega^1$ and senders 3 and 4 are misreporting the state, or the state is $\omega^2$ and senders $1$ and $2$ are misreporting. Even though a mechanism is honest, the mediator's message function $m_d$ should still be defined for inputs with different components, and it must actually be done in such a way that players are not incentivized to misreport.

Let $m_d^*$ be the function that maps each message $(m_1, m_2, m_3, m_4)$ to the probability that $m_d(m_1, \ldots,\allowbreak  m_4) = 0$. If the honest mechanism determined by $m_d^*$ is $2$-resilient incentive-compatible for the senders, the probability of playing $0$ should be lower with $(\omega^1,\omega^1,\omega^2,\omega^2)$ than with $(\omega^2,\omega^2,\omega^2,\omega^2)$. Otherwise, in $\omega^2$, senders $1$ and $2$ can increase their utility by reporting $1$ instead of $2$. Thus, $m_d^*$ must satisfy that $m_d^*(\omega^1,\omega^1,\omega^2,\omega^2) \le m_d^*(\omega^2,\omega^2,\omega^2,\omega^2)$. Moreover, $m_d^*(\omega^1,\omega^1,\omega^2,\omega^2) \ge m_d^*(\omega^1,\omega^1,\omega^1,\omega^1)$, since otherwise, in state $\omega^1$, senders $3$ and $4$ can increase their utility by reporting $2$ instead of $1$. These inequalities together imply that $m_d^*(\omega^1,\omega^1,\omega^1,\omega^1) \le m_2^*(\omega^2,\omega^2,\omega^2,\omega^2)$, and therefore that $o^*(\omega^1) \le o^*(\omega^2)$. In fact, we can show that this is the only requirement for $o$ to be implementable by a mechanism that is $k$-resilient incentive compatible for the senders. Given $o$ such that $o^*(\omega^1) \le o^*(\omega^2)$, consider an honest mechanism determined by $m_d^*$, in which $m_d^*(m_1, m_2, m_3, m_4)$ is defined as follows:

\begin{itemize}
    \item If at least three players sent the same message $\omega$, then $m_d^*(m_1, \allowbreak m_2, m_3, m_4) := o^*(\omega)$.
    \item Otherwise, $m_d^*(m_1, m_2, m_3, m_4) := (o^*(\omega^1) + o^*(\omega^2))/2$.
\end{itemize}

We can check that the honest mechanism $M$ determined by $m_d^*$ is indeed $2$-resilient incentive-com\-pa\-ti\-ble for the senders. Clearly, no individual sender would ever want to deviate since it cannot influence the outcome by itself (still three messages would disclose the true state). Moreover, no pair of senders can increase their utility by deviating since, in both $\omega^1$ and $\omega^2$, at least one of the senders in the coalition would get the maximum possible utility by disclosing the true state. This shows that, in this example, $o^*(\omega^1) \le o^*(\omega^2)$ is the only necessary and sufficient condition for $o$ to be implementable by a mechanism that is $2$-resilient incentive-compatible for the senders.

\subsection{Theorem~\ref{thm:high-level-char}, general case}

The proof of the general case follows the same lines as the previous example. We show the generalization for the case of $k$-resilient incentive-compatibility, the proof for strong $k$-resilience is analogous, with the main differences highlighted in Section~\ref{sec:proof-strong}. 
In the example, note that we could argue that $m_d^*(\omega^1,\omega^1,\omega^2,\omega^2)$ should be greater than $m_d^*(\omega^1, \omega^1, \ldots, \omega^1)$  since, otherwise, senders 3 and 4 could increase their utility in state $\omega^1$ by reporting $\omega^2$ instead of $\omega^1$. More generally, suppose that in some state $\omega$ there exists a subset $C$ of at most $k$ senders such that all senders in $C$ prefer action $1$ to action $0$. Then, all $k$-resilient truthful mechanisms must satisfy that $m_d^*(\omega, \ldots, \omega) \ge m_d^*(\vec{m})$ for all inputs $\vec{m}$ such that $m_i = \omega$ for all $i \not \in C$. 

Following this intuition, we make the following definitions. Let $\Gamma = (S, A, \Omega, p, u)$ be an information aggregation game with $\Omega = \{\omega^1, \ldots, \omega^m\}$ and $|S| = n$. We say that a possible input $\vec{m} = (m_1, \ldots, m_n)$ for $m_d$ is $\omega$-\emph{pure} if $m_1 = m_2 = \ldots = m_n = \omega$ (i.e., if all $m_j$ are equal to $\omega$). We also say that an input is pure if it is $\omega$-pure for some $\omega$. Additionally, if $\omega \in \Omega$, we denote by $\vec{\omega}$ the $\omega$-pure input $(\omega, \ldots, \omega)$. Moreover, given two inputs $\vec{m} = (m_1, \ldots, m_n)$ and $\vec{m}' = (m'_1, \ldots, m'_n)$ for $m_d$, we say that $\vec{m} \prec_k \vec{m}'$ if the subset $C$ of senders such that their input differs in $\vec{m}$ and $\vec{m}'$ has size at most $k$, and such that 
\begin{itemize}
    \item[(a)] $\vec{m}$ is $\omega$-pure  for some $\omega$ and all senders in $C$ strictly prefer action $1$ to action $0$ in state $\omega$, or 
    \item[(b)] $\vec{m}'$ is $\omega$-pure  for some $\omega$ and all senders in $C$ strictly prefer action $0$ to action $1$ in state $\omega$. 
\end{itemize}
By construction we have the following property of $\prec_k$.

\begin{lemma}\label{lem:inequality}
A honest mechanism is $k$-resilient incentive-compatible for the senders if and only if $$\vec{m} \prec_k \vec{m}' \Longrightarrow m_d^*(\vec{m}) \le m_d^*(\vec{m}')$$
for all inputs $\vec{m}$ and $\vec{m}'$.
\end{lemma}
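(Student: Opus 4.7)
The plan is to prove the two directions separately. For the forward direction ($\Rightarrow$), assume the honest mechanism is $k$-resilient incentive-compatible and suppose $\vec{m} \prec_k \vec{m}'$ via case (a) (case (b) is symmetric). Then $\vec{m} = \vec{\omega}$ is $\omega$-pure and the differing set $D \subseteq S$, of size at most $k$, consists of senders strictly preferring action $1$ in state $\omega$. I would consider the coalition $D$ that deviates only in state $\omega$ by sending $\vec{m}'_i$ instead of $\omega$ for each $i \in D$. Since the non-$D$ senders are honest, the mediator receives exactly $\vec{m}'$ in state $\omega$, and each $i \in D$'s expected utility changes by $p(\omega)(m_d^*(\vec{m}') - m_d^*(\vec{\omega}))(u(i,\omega,0) - u(i,\omega,1))$. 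Were $m_d^*(\vec{m}') < m_d^*(\vec{\omega})$, both factors would be strictly negative (since $i \in D$ prefers action $1$), so every member of $D$ would strictly benefit, contradicting $k$-resilience; hence $m_d^*(\vec{\omega}) \le m_d^*(\vec{m}')$.

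For the reverse direction ($\Leftarrow$), assume the ordering and fix a coalition $C$ with $|C| \le k$ and a deviation $\vec{\tau}_C$, encoded by distributions $q_\omega \in \Delta(\Omega^C)$. Writing $v_i(\omega) := u(i,\omega,0) - u(i,\omega,1)$ and $f_C(\omega) := \mathbb{E}_{\vec{m} \sim q_\omega}[m_d^*(\vec{m}) - m_d^*(\vec{\omega})]$, each sender's expected utility change is $\Delta_i = \sum_\omega p(\omega) f_C(\omega) v_i(\omega)$, and I need to show $\min_{i \in C} \Delta_i \le 0$. Since $\sum_i \lambda_i \Delta_i$ is bilinear in $q$ and $\lambda \in \Delta(C)$ and the relevant simplices are compact, the minimax theorem gives $\max_q \min_i \Delta_i = \min_{\lambda \in \Delta(C)} \max_q \sum_i \lambda_i \Delta_i$. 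It therefore suffices to exhibit $\lambda \in \Delta(C)$ making the right-hand inner maximum at most $0$. A state-by-state computation shows this maximum equals $\sum_\omega p(\omega)(V_\lambda(\omega)^+ \bar{f}(\omega) + V_\lambda(\omega)^- \underline{f}(\omega))$, where $V_\lambda(\omega) := \sum_i \lambda_i v_i(\omega)$ and $\bar{f}(\omega), \underline{f}(\omega)$ denote the maximum and minimum of $m_d^*(\vec{m}) - m_d^*(\vec{\omega})$ over inputs $\vec{m}$ that agree with $\vec{\omega}$ outside $C$. Each summand is non-negative, so the whole expression vanishes exactly when $V_\lambda(\omega) \le 0$ whenever $\bar{f}(\omega) > 0$ and $V_\lambda(\omega) \ge 0$ whenever $\underline{f}(\omega) < 0$.

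The ordering supplies the pointwise witnesses that make this linear system feasible: if $\bar{f}(\omega) > 0$, some input $\vec{m}$ with differing set $D \subseteq C$, $|D| \le k$, achieves $m_d^*(\vec{m}) > m_d^*(\vec{\omega})$, and the contrapositive of case (b) of $\prec_k$ forces some $j \in D$ with $v_j(\omega) \le 0$; symmetrically, $\underline{f}(\omega) < 0$ yields $j \in C$ with $v_j(\omega) \ge 0$. The remaining task is to lift these pointwise witnesses into a single $\lambda \in \Delta(C)$ that satisfies every state's constraint simultaneously; I plan to obtain this via a Farkas-style feasibility argument, showing that any purported infeasibility certificate would combine across states into a statement contradicting the existence of a hurt coalition member in each relevant state. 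I expect this global lifting step to be the main obstacle, since the per-state witnesses depend on the state and must be reconciled into one distribution $\lambda$ over $C$.
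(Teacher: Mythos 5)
Your forward direction is fine and matches the intended argument (modulo the implicit assumption $p(\omega)>0$). The problem is the reverse direction, and the obstacle you flag at the end is not a patchable technicality: under the ex ante reading of $k$-resilience that you adopt (each $\Delta_i$ aggregated over states), the implication you are trying to prove is false, so no Farkas/minimax lifting can exist. Concretely, take $n=5$, $k=2$, two equiprobable states; sender $1$ strictly prefers action $1$ at $\omega^1$ and strongly prefers $0$ at $\omega^2$, sender $2$ is the mirror image, senders $3,4,5$ are indifferent everywhere. Let $m_d^*$ equal $1/2$ on every input except $\vec{a}=(\omega^2,\omega^2,\omega^1,\omega^1,\omega^1)$ and $\vec{b}=(\omega^1,\omega^1,\omega^2,\omega^2,\omega^2)$, where it equals $1$. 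Neither $\vec{a}$ nor $\vec{b}$ is $\prec_2$-comparable to any pure input (the differing set from the nearby pure input is $\{1,2\}$, whose members have opposed strict preferences, and the differing set from the other pure input has size $3>k$), so the monotonicity hypothesis holds; yet the deviation in which senders $1$ and $2$ both misreport in both states strictly raises both of their ex ante expected utilities. So your per-state witnesses genuinely cannot be reconciled into a single $\lambda\in\Delta(C)$, and the ``main obstacle'' you anticipate is an impossibility, not a gap in your technique.

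For comparison, the paper offers no written proof of this lemma at all (it is asserted ``by construction''), and the argument it implicitly relies on is exactly the per-state observation you already have: in each state $\omega$, the subset $D\subseteq C$ of senders who actually change their report must, by the contrapositive of the $\prec_k$ constraints, contain a member whose utility at $\omega$ weakly decreases. That argument proves the lemma when coalition deviations are assessed state by state (interim, i.e., conditional on the realized state, which all senders observe); it does not prove the ex ante version, as the example above shows. So the correct fix is not to complete your minimax step but to interpret (or restate) $k$-resilient incentive compatibility for the senders state-wise — equivalently, to restrict attention to deviations at a single state — after which your reverse direction reduces to the witness argument you already wrote and the lemma is immediate.
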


Note that Lemma~\ref{lem:inequality} completely characterizes the honest mechanisms that are $k$-resilient incentive-compatible for the senders. However, this lemma is of little use by itself since mechanisms have an exponential number of possible inputs. Let $\le_k$ be the partial order between pure states induced by $\prec_k$. More precisely, we say that two states $\omega$ and $\omega'$ satisfy $\omega \le_k \omega'$ if there exists a sequence of inputs $\vec{m}^1, \ldots, \vec{m}^t$ such that $$\vec{\omega} \prec_k \vec{m}^1 \prec_k \ldots \prec_k \vec{m}^t \prec_k \vec{\omega}'.$$
For instance, in the example at the beginning of this section, we would have that $\omega^1 \le_2 \omega^2$ since $(\omega^1,\omega^1,\omega^1,\omega^1) \prec_2 (\omega^1,\omega^1,\omega^2,\omega^2) \prec_2 (\omega^2,\omega^2,\omega^2,\omega^2)$. The following proposition shows that the $\le_k$ relations completely determine the outcomes implementable by honest mechanisms that are $k$-resilient incentive-compatible for the senders.

\begin{proposition}\label{prop:inequalities}
Let $\Gamma = (S, A, \Omega, p, u)$ be an information aggregation game. Then, an outcome $o$ of $\Gamma$ is implementable by an honest mechanism that is $k$-resilient incentive-compatible for the senders if and only if $$\omega \le_k \omega' \Longrightarrow o^*(\omega) \le o^*(\omega')$$
for all $\omega, \omega' \in \Omega$.
\end{proposition}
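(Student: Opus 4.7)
The plan is to dispatch the two directions separately. For the forward direction, suppose $M$ is honest and $k$-resilient incentive-compatible for the senders. Given $\omega \le_k \omega'$, the defining chain $\vec{\omega} \prec_k \vec{m}^1 \prec_k \ldots \prec_k \vec{m}^t \prec_k \vec{\omega}'$ together with repeated applications of Lemma~\ref{lem:inequality} gives $m_d^*(\vec{\omega}) \le m_d^*(\vec{m}^1) \le \ldots \le m_d^*(\vec{\omega}')$. Honesty then identifies $m_d^*(\vec{\omega}) = o^*(\omega)$ and $m_d^*(\vec{\omega}') = o^*(\omega')$, yielding the desired inequality.

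For the converse, assume $o^*$ is monotone with respect to $\le_k$, and I would construct an explicit honest mechanism $M$ that implements $o$ and is $k$-resilient incentive-compatible for the senders. Letting $\prec_k^*$ denote the reflexive-transitive closure of $\prec_k$, set
\[
m_d^*(\vec{m}) \;:=\; \max\!\left(\{0\}\cup\{o^*(\omega')\,:\, \omega'\in\Omega,\ \vec{\omega}'\prec_k^*\vec{m}\}\right),
\]
and take $M$ to be the honest mechanism determined by $m_d^*$. Note the maximum is over a finite set, and the output lies in $[0,1]$ since $o^*$ does.

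To verify this works I would check two claims. First, $m_d^*(\vec{\omega})=o^*(\omega)$ for every pure input $\vec{\omega}$: reflexivity of $\prec_k^*$ puts $o^*(\omega)$ in the set, while for any $\omega'\neq\omega$ with $\vec{\omega}'\prec_k^*\vec{\omega}$ the chain definition of $\le_k$ gives $\omega'\le_k\omega$, whence the monotonicity hypothesis yields $o^*(\omega')\le o^*(\omega)$; so $o^*(\omega)$ is indeed the maximum. This establishes that $M$ implements $o$. Second, $m_d^*$ respects $\prec_k$: if $\vec{m}\prec_k\vec{m}'$, then any $\prec_k^*$-chain from a pure input to $\vec{m}$ extends by one step to a $\prec_k^*$-chain to $\vec{m}'$, so the set defining $m_d^*(\vec{m})$ is contained in the set defining $m_d^*(\vec{m}')$ and the maxima compare accordingly. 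Applying Lemma~\ref{lem:inequality} then shows $M$ is $k$-resilient incentive-compatible for the senders.

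The main obstacle is the first verification above — that the maximum in the definition of $m_d^*(\vec{\omega})$ does not overshoot $o^*(\omega)$. It is the only place where the hypothesis $\omega\le_k\omega'\Rightarrow o^*(\omega)\le o^*(\omega')$ is actually used, and it is precisely what prevents the construction from assigning a value larger than $o^*(\omega)$ at the pure input $\vec{\omega}$. Everything else reduces to set-theoretic monotonicity of the maximum under inclusion of chain-reachability sets, together with a direct appeal to Lemma~\ref{lem:inequality}.
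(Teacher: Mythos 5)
Your proposal is correct, and its overall skeleton matches the paper's: the forward direction is exactly the paper's (walk along the chain defining $\le_k$ and apply Lemma~\ref{lem:inequality} at each step, then use honesty to identify $m_d^*(\vec{\omega})$ with $o^*(\omega)$), and the converse, like the paper's, builds an explicit monotone extension of $o^*$ from pure inputs to all inputs and invokes Lemma~\ref{lem:inequality}. Where you differ is the extension itself. The paper sets $m_d^*(\vec{m})$ on a non-pure input to the average of $\min_{\vec{m}'\in A^k_\prec(\vec{m})} m_d^*(\vec{m}')$ and $\max_{\vec{m}'\in A^k_\succ(\vec{m})} m_d^*(\vec{m}')$, with defaults $1$ and $0$ when the corresponding neighbor sets are empty; this is a purely local (one-step) rule and relies on the observation that all $\prec_k$-neighbors of a non-pure input are pure, plus a short case analysis to verify the Lemma~\ref{lem:inequality} condition. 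You instead take $m_d^*(\vec{m})$ to be the maximum of $o^*$ over all pure inputs $\prec_k^*$-below $\vec{m}$ (with default $0$), which makes monotonicity under $\prec_k$ an immediate consequence of inclusion of downward-reachability sets and isolates the use of the hypothesis $\omega\le_k\omega'\Rightarrow o^*(\omega)\le o^*(\omega')$ in the single check that the maximum at a pure input $\vec{\omega}$ equals $o^*(\omega)$. Both constructions prove the same statement; yours trades the paper's symmetric local averaging for a global downward-closure formula that eliminates the case analysis, at the cost of being asymmetric (off-path values default low rather than being sandwiched between neighbors), which is immaterial for the proposition.
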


\begin{proof}
The fact that any honest mechanism that is $k$-resilient incentive-compatible for the senders implies $\omega \le_k \omega' \Longrightarrow o^*(\omega) \le o^*(\omega')$ follows directly from Lemma~\ref{lem:inequality}.

To show the converse, given $o$ satisfying $\omega \le_k \omega' \Longrightarrow o^*(\omega) \le o^*(\omega')$, define $m_d^*$ as follows.
If $\vec{m}$ is $\omega$-pure for some $\omega$, then $m_d^*(\vec{m}) := o^*(\omega)$. Otherwise, let $A^k_\prec(\vec{m})$ be the set of inputs $\vec{m}'$ such that $\vec{m} \prec_k \vec{m}'$ and $A^k_\succ(\vec{m})$ be the set of inputs $\vec{m}'$ such that $\vec{m}' \prec_k \vec{m}$. Then,
\begin{itemize}
    \item If $A^k_\prec(\vec{m}) = \emptyset$, then $m_d^*(\vec{m}) := 1$.
    \item Otherwise, if $A^k_\succ(\vec{m}) = \emptyset$, then $m_d^*(\vec{m}) := 0$.
    \item Otherwise, $$m_d^*(\vec{m}) := \frac{\min_{\vec{m}' \in A^k_\prec(\vec{m})}\{m_d^*(\vec{m}')\} + \max_{\vec{m}' \in A^k_\succ(\vec{m})}\{m_d^*(\vec{m}')\}}{2}.$$
\end{itemize}

Note that $m_d^*$ is well-defined since all elements in $A^k_\prec(\vec{m})$ and $A^k_\succ(\vec{m})$ are pure, which means that $m_d^*(\vec{m}')$ is already defined for all these elements. Moreover, the honest mechanism $M$ determined by $m_d^*$ implements $o$. It remains to show that $M$ is $k$-resilient incentive-compatible for the senders. By Lemma~\ref{lem:inequality} this reduces to show that $\vec{m} \prec_k \vec{m}' \Longrightarrow m_d^*(\vec{m}) \le m_d^*(\vec{m}')$
for all inputs $\vec{m}$ and $\vec{m}'$. To show this, take a pure input $\vec{\omega}$ and another input $\vec{m}$ such that $\vec{\omega} \prec_k \vec{m}$. 
If $\vec{m}$ is $\omega'$-pure, then $\vec{\omega} \prec_k \vec{m} \Longrightarrow \vec{\omega} \le_k \vec{\omega}'$ and thus $m_d^*(\vec{\omega}) \le m_d^*(\vec{\omega}')$.
If $\vec{m}$ is not pure and $A^k_\prec(\vec{m}) = \emptyset$ we have by construction that $m_d^*(\vec{m}) = 1$, which is greater than $m_d^*(\vec{\omega})$. Otherwise, for all $\omega'$ such that $\vec{\omega}' \in A^k_\prec(\vec{m})$, we have that $\omega \le_k \omega'$ and thus by assumption that $m_d^*(\vec{\omega}) \le m_d^*(\omega')$. Therefore,
$$\frac{\min_{\vec{m}' \in A^k_\prec(\vec{m})}\{m_d^*(\vec{m}')\}}{2} \ge \frac{m_d^*(\vec{\omega})}{2}$$
Moreover, we have that 
$$\frac{\max_{\vec{m}' \in A^k_\succ(\vec{m})}\{m_d^*(\vec{m}')\}}{2} \ge \frac{m_d^*(\vec{\omega})}{2}$$
since $\vec{\omega} \in A^k_\succ(\vec{m}')$. Hence 
$$m_d^*(\vec{m}) \ge m_d^*(\vec{\omega})$$ as desired. An analogous argument can be used for the case in which $\vec{m} \prec_k \vec{\omega}$.
\end{proof}

It remains to show that the partial order between the states in $\Omega$ defined by $\le_k$ can be computed with a polynomial algorithm. To do this, note that, by definition, any chain $$\vec{\omega} \prec_k \vec{m}^1 \prec_k \ldots \prec_k \vec{m}^t \prec_k \vec{\omega}'$$
between two pure inputs $\vec{\omega}$ and $\vec{\omega}'$ must satisfy that either $\vec{m}^1$ or $\vec{m}^2$ are also pure. This implies the following lemma:

\begin{lemma}\label{lem:simplification}
Let $\Gamma = (S, A, \Omega, p, u)$ be an information aggregation game with $\Omega = \{\omega^1, \ldots, \omega^m\}$. Let $E$ a system of equations over $x_1, \ldots, x_m$ such that equation $x_i \le x_j$ appears in $E$ if and only if $\vec{\omega}^i \prec_k \vec{\omega}_j$ or if there exists an input $\vec{m}$ such that $\vec{\omega}^i \prec_k \vec{m} \prec_k \vec{\omega}^j$. Then, $y_1, \ldots, y_m$ is a solution of $E$ if and only if $$\omega^i \le_k \omega^j \Longrightarrow y_i \le y_j$$ for all $i, j \in [m]$.
\end{lemma}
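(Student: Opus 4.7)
The plan is to exploit an asymmetry built into the definition of $\prec_k$: reading the definition, case (a) forces $\vec{m}$ to be pure and case (b) forces $\vec{m}'$ to be pure, so $\vec{m} \prec_k \vec{m}'$ always has at least one pure side. Consequently, in any chain $\vec{m}^0 \prec_k \vec{m}^1 \prec_k \ldots \prec_k \vec{m}^{t+1}$, no two consecutive entries can be simultaneously non-pure. This single combinatorial observation drives the whole proof.

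For the forward direction, assume $y_1, \ldots, y_m$ is a solution of $E$ and that $\omega^i \le_k \omega^j$. Fix a witnessing chain $\vec{\omega}^i \prec_k \vec{m}^1 \prec_k \ldots \prec_k \vec{m}^t \prec_k \vec{\omega}^j$, and set $\vec{m}^0 := \vec{\omega}^i$ and $\vec{m}^{t+1} := \vec{\omega}^j$. Let $0 = s_0 < s_1 < \ldots < s_\ell = t+1$ be the indices at which the chain's entries are pure. By the structural observation, consecutive pure indices satisfy $s_{r+1} - s_r \le 2$, so between $\vec{m}^{s_r}$ and $\vec{m}^{s_{r+1}}$ either the two are directly related (when $s_{r+1} = s_r + 1$) or exactly one non-pure element intervenes (when $s_{r+1} = s_r + 2$). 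In both cases one of the two conditions defining membership in $E$ is met, so writing $\vec{m}^{s_r} = \vec{\omega}^{a_r}$ we have $x_{a_r} \le x_{a_{r+1}} \in E$, and hence $y_{a_r} \le y_{a_{r+1}}$. Telescoping over $r$ gives $y_i = y_{a_0} \le y_{a_\ell} = y_j$.

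The converse is almost immediate: every inequality $x_i \le x_j$ in $E$ arises from either $\vec{\omega}^i \prec_k \vec{\omega}^j$ or a length-two chain $\vec{\omega}^i \prec_k \vec{m} \prec_k \vec{\omega}^j$, both of which directly witness $\omega^i \le_k \omega^j$; by hypothesis, $y_i \le y_j$, so $E$ is satisfied. The only delicate point in the whole argument is the chain-decomposition step in the forward direction, and that is the main (though modest) obstacle; it reduces cleanly to the observation that $\prec_k$ is ``pure on at least one side''. As a byproduct, note that $E$ contains at most $m^2$ inequalities, one per ordered pair of pure inputs, which matches the $O(m^2)$ bound promised in Theorem~\ref{thm:high-level-char}; since whether a given inequality $x_i \le x_j$ belongs to $E$ can be checked by enumerating over at most one intermediate non-pure input structure determined by a subset $C$ of size at most $k$ of disagreeing coordinates, this test is polynomial in $m$ and $n$.
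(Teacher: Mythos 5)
Your proof is correct and follows essentially the same route as the paper: the key observation that every $\prec_k$-relation has a pure input on at least one side, so any chain between pure inputs decomposes into segments of length one or two between consecutive pure entries, each of which is an inequality of $E$, with the converse being immediate since every equation of $E$ directly witnesses a $\le_k$-relation. This matches the paper's (sketched) argument, which it states via the observation that in any chain $\vec{\omega} \prec_k \vec{m}^1 \prec_k \ldots \prec_k \vec{m}^t \prec_k \vec{\omega}'$ either $\vec{m}^1$ or $\vec{m}^2$ must be pure.
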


Intuitively, Lemma~\ref{lem:simplification} says that the inequalities obtained from chains of length 2 or 3 \emph{span} the partial order over $\Omega$ defined by $\le_k$, and thus that we can take the system of equations $E$ of Theorem~\ref{thm:high-level-char} to be the one in the lemma above. Therefore, given two states $\omega$ and $\omega'$, it only remains to show that we can check in polynomial time if $\vec{\omega} \prec_k \vec{\omega}'$ or if there exists a state $\vec{m}$ such that $\vec{\omega} \prec_k \vec{m} \prec_k \vec{\omega}'$. Checking if $\vec{\omega} \prec_k \vec{\omega}'$ is equivalent to checking if $k = n$ and either all senders prefer $1$ in $\omega$ or all senders prefer $0$ in $\omega'$. Finding an input $\vec{m}$ such that $\vec{\omega} \prec_k \vec{m} \prec_k \vec{\omega}'$ reduces to finding an input $\vec{m}$ such that 
\begin{itemize}
    \item [(a)] the set $C_\omega$ of senders such that their message is not $\omega$ in $\vec{m}$ has size at most $k$, and all senders in $C_\omega$ strictly prefer $1$ to $0$ in $\omega$.
    \item [(b)] the set $C_{\omega'}$ of senders such that their message is not $\omega'$ in $\vec{m}$ has size at most $k$, and all of them strictly prefer $0$ to $1$ in $\omega'$.
\end{itemize}

The high level idea of the algorithm is that, if $\vec{m}$ satisfies the above properties, all senders $i$ that prefer $0$ to $1$ in $\omega$ must satisfy that $m_i = \omega$ (otherwise, it breaks property (a)), and all senders $i$ that prefer $1$ to $0$ in $\omega'$ must satisfy that $m_i = \omega'$ (otherwise, it breaks property (b)). If there is a sender $i$ that prefers $0$ to $1$ in $\omega$ and $1$ to $0$ in $\omega'$ then such an input $\vec{m}$ does not exist, and if there is a sender $i$ that strictly prefers $1$ to $0$ in $\omega$ and $0$ to $1$ in $\omega'$, then $m_i$ has no constraints. The only remaining restriction is that there can only be at most $k$ values different than $\omega$ and at most $k$ values different than $\omega'$ (note that this implies that if $2k < n$ such an input does not exist). The algorithm goes as follows:

\begin{enumerate}
    \item Split the set of senders into four subsets $X_{0,1}^{0,1}, X_{0,1}^{1,0}, X_{1, 0}^{0,1}, X_{1, 0}^{1,0}$, in which $X_{i,j}^{i', j'}$ is the set of senders that prefer $i$ to $j$ in $\omega$ (resp., strictly prefer if $i = 1$) and prefer $i'$ to $j'$ in $\omega'$ (resp., strictly prefer if $i' = 0$).
    \item If $X_{0,1}^{1,0} \not = \emptyset$ or $2k < n$, there is no solution.
    \item If $|X_{0,1}^{0,1}| > k$ or $|X_{1,0}^{1,0}| > k$, there is no solution.
    \item Otherwise, set $m_i = \omega$ for all $i \in X_{0,1}^{0,1}$, $m_i = \omega'$ for all $i \in X_{1,0}^{1,0}$. Then, set $k - |X_{0,1}^{0,1}|$ of the messages from $X_{1,0}^{0,1}$ to $\omega$ and the rest to $\omega'$. Return $\vec{m}$.
\end{enumerate}

\textbf{Proof of Correctness:} Because of the previous discussion, if $X_{0,1}^{1,0} \not = \emptyset$ or $2k < n$, there is no solution. If $|X_{0,1}^{0,1}| \ge k$ then, any input $\vec{m}$ that satisfies $\vec{\omega} \prec_k \vec{m} \prec_k \vec{\omega}'$ would require to have at least $|X_{0,1}^{0,1}|$ components equal to $\omega$, which would break property (b). An analogous argument can be used when $|X_{1,0}^{1,0}| > k$. If none of these conditions hold, then we set all messages from $X_{0,1}^{0,1}$ to $\omega$, all messages from $X_{1,0}^{1,0}$ to $\omega'$, and we split the messages sent by senders in $X_{1,0}^{0,1}$ between $\omega$ and $\omega'$ in such a way that no value appears more than $k$ times. The resulting input satisfies properties (a) and (b). 

\subsection{Theorem~\ref{thm:high-level-char}, strong $k$-resilience}\label{sec:proof-strong}

The proof of Theorem~\ref{thm:high-level-char} for strong $k$-resilience is analogous to the one of $k$-resilience in the previous section. The main difference is the definition of $\prec_k$. In this case we say that two inputs $\vec{m}$ and $\vec{m}'$ satisfy $\vec{m} \prec_k^s \vec{m}'$ if and only if the subset $C$ of senders such that their input differs in $\vec{m}$ and $\vec{m}'$ has size at most $k$, and such that 
\begin{itemize}
    \item[(a)] $\vec{m}$ is $\omega$-pure  for some $\omega$ and at least one sender in $C$ strictly prefers action $1$ to action $0$ in state $\omega$, or 
    \item[(b)] $\vec{m}'$ is $\omega$-pure  for some $\omega$ and at least one sender in $C$ strictly prefers action $0$ to action $1$ in state $\omega$. 
\end{itemize}

We have that $\vec{\omega} \prec_k^s \vec{\omega}'$ if and only if $k = n$ and at least one sender in $\omega$ prefers action $1$ to action $0$, or at least one sender in $\omega'$ prefers action $0$ to action $1$. Given $\omega$ and $\omega'$, finding if there exists $\vec{m}$ such that $\vec{\omega} \prec_k^s \vec{m} \prec_k^s \vec{\omega}'$ can be reduced to finding if there exists a partition of the set of senders $S$ into two sets $S_\omega$ and $S_{\omega'}$ such that $|S_\omega| \le k$ and $|S_{\omega'}| \le k$, and such that at least one sender of $S_\omega$ prefers action $0$ to $1$ in $\omega'$  and at least one sender of $S_{\omega'}$ prefers $1$ to $0$ in $\omega$. This can easily be done in polynomial time. 

For future reference, we define $\le_k^s$ in the same way as $\le_k$ except that we use $\prec_k^s$ instead of $\prec_k$. 

\section{Proof of Theorem~\ref{thm:construction}}\label{sec:proof-construction}

Most of the tools used to prove Theorem~\ref{thm:construction} have already appeared in the proof of Theorem~\ref{thm:high-level-char}. We prove the theorem for $k$-resilience, the case of strong $k$-resilience is analogous. Given a game $\Gamma$ and an outcome $o$ for $\Gamma$, we set $m_d^*(\vec{\omega}) := o^*(\omega)$ for each $\omega \in \Omega$. For every other input $\vec{m}$, we define $m_d^*(\vec{m})$ in the same way as in the proof of Proposition~\ref{prop:inequalities}. As shown in the proof of Theorem~\ref{thm:high-level-char}, checking if $\vec{m} \prec_k \vec{m}'$ can be performed in polynomial time. Thus, $m_d^*(\vec{m})$ can also be computed in polynomial time.

\section{Extended Model and Generalization of Main Results}\label{sec:extended}

An \emph{extended information aggregation game} is defined in the same way as a standard information aggregation game (see Section~\ref{sec:model}) except that each sender starts the game with a private signal $x_i$ (as opposed to all senders starting the game with the same input $\omega$), and the utility function $u$ takes as input the signals from each sender instead of just $\omega$. More precisely, in an extended information aggregation game $\Gamma = (S, A, X, p, u)$ there is a set of senders $S = \{1,2,3,\ldots, n\}$, a receiver $r$, a mediator $d$, a set of actions $A$, a set $X = X_1 \times X_2 \times \ldots \times X_n$ of signals, a probability distribution $p$ over $X$, and a utility function $u : (S \cup \{r\}) \times X \times A \longrightarrow \mathbb{R}$. Each game instance proceeds exactly the same way as in a standard information aggregation game except that, in phase 1, a signal profile $(x_1, \ldots, x_n) \in X$ is sampled following distribution $p$, and each signal $x_i$ is disclosed only to sender $i$. In this context, an outcome $o$ for $\Gamma$ is just a function from signal profiles $\vec{x} \in X$ to distributions over $A$, and mechanisms for $\Gamma$ are determined by functions $m_d^*$ from $X$ to $[0,1]$.

Our aim is to generalize the results from Section~\ref{sec:results} to the extended model. However, the
main problem is that, for a fixed signal profile, the preferences of the agents may depend on their coalition. For instance, consider a game $\Gamma$ for five players with uniformly distributed binary signals and binary actions such that the utility of each sender is $1$ if the action that the receiver plays is equal to the majority of the signals, and their utility is $0$ otherwise. Suppose that senders have signals $(0,0,0,1,1)$. It is easy to check that if players $1$, $2$ and $3$ collude, player $1$ would prefer action $0$ to action $1$. However, if players $1$, $4$ and $5$ collude, player $1$ would prefer action $1$ since in this case it is more likely that the majority of the signals are $1$.

We can avoid the issue above by assuming that the game is \emph{$k$-separable}, which is that, for all signal profiles $\vec{x}$ and all senders $i$, there exists an action $a$ such that the preference of sender $i$ inside any coalition $K$ of size at most $k$ is $a$. Intuitively, an extended information aggregation game is \emph{$k$-separable} if the preferences of the senders do not depend on the coalition they are in. With this, we can provide algorithms for the characterization and implementation of $k$-resilient truthful implementable outcomes that are efficient relative to the  size of the description of the game $\Gamma$.

\begin{theorem}\label{thm:high-level-desc-extended}
Let $\Gamma = (S, A, X, p, u)$ be a $k$-separable extended information aggregation game such that the support of signal profiles in distribution $p$ is $\{(\vec{x})_1, \ldots, (\vec{x})_m\}$. Then, there exists a system $E$ of $O(m^2)$ equations over variables $x_1, \ldots, x_m$, such that each equation of $E$ is of the form $x_i \le x_j$ for some $i,j \in [m]$, and such that an outcome $o$ of $\Gamma$ is implementable by a $k$-resilient truthful mechanism (resp., strong $k$-resilient truthful mechanism) if and only if 

\begin{itemize}
\item[(a)] $x_1 = o^*((\vec{x})_1), \ldots, x_m = o^*((\vec{x})_m)$ is a solution of $E$.
\item[(b)] $E_r\left(o\right) \ge U_a$ for all $a \in A$.
\end{itemize}

Moreover, the equations of $E$ can be computed in polynomial time over $m$ and the number of senders $n$.
\end{theorem}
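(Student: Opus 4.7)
The plan is to follow the structure of the proof of Theorem~\ref{thm:high-level-char} while replacing "pure inputs" of the form $\vec{\omega}$ by signal profiles $\vec{x}$ in the support of $p$, and using $k$-separability to ensure that each sender's preference at a signal profile is independent of the coalition to which it belongs. First, I would observe that Lemma~\ref{ref:lemma-receiver-compatibility} generalizes verbatim, since its proof only involves the receiver's expected utility against a constant action versus the mediator's suggestion and does not depend on the signal structure. This takes care of condition~(b) and reduces the theorem to finding the system $E$ that characterizes honest mechanisms which are $k$-resilient incentive-compatible for the senders.

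Next, I would redefine the relation $\prec_k$ directly on the mediator's input space $X$: say $\vec{m} \prec_k \vec{m}'$ iff the set $C$ of senders whose coordinates differ between $\vec{m}$ and $\vec{m}'$ has size at most $k$ and either (a) $\vec{m} = \vec{x}$ lies in the support of $p$ and every sender in $C$ strictly prefers $1$ to $0$ at $\vec{x}$, or (b) the symmetric condition holds for $\vec{m}'$. By $k$-separability these preferences are unambiguous. The analogs of Lemma~\ref{lem:inequality} and Proposition~\ref{prop:inequalities} then go through with essentially the same arguments: a joint deviation by a coalition of at most $k$ senders always transforms a true signal profile $\vec{x}$ into some input $\vec{m}$ differing from $\vec{x}$ only on the coalition, and $k$-separability implies that the resulting coalitional preference reduces coordinate-by-coordinate to the individual preferences used in the definition of $\prec_k$. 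The induced partial order $\le_k$ on the $m$ signal profiles in the support therefore characterizes implementability exactly as in the basic model, and the same averaging construction as in the proof of Proposition~\ref{prop:inequalities} yields a concrete mechanism implementing any compatible outcome.

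It then remains to prove the analog of Lemma~\ref{lem:simplification} and to exhibit a polynomial-time algorithm that, given two profiles $\vec{x}, \vec{x}'$ in the support, decides whether $\vec{x} \prec_k \vec{x}'$ directly or whether there exists an intermediate input $\vec{m}$ with $\vec{x} \prec_k \vec{m} \prec_k \vec{x}'$; this produces the $O(m^2)$ inequalities of $E$. The key observation for the intermediate-input test is that one may restrict to $\vec{m}$ with $m_i \in \{x_i, x'_i\}$ for every $i$, since using any other value only increases the sizes of both coalitions. Classifying senders by whether $x_i = x'_i$ and by the two relevant preferences (prefers $1$ to $0$ at $\vec{x}$? prefers $0$ to $1$ at $\vec{x}'$?), the problem reduces, exactly as in the four-class case analysis of Section~\ref{sec:proofs}, to a feasibility question about cardinality constraints ("at most $k$ coordinates flipped from $\vec{x}$'' and "at most $k$ coordinates flipped from $\vec{x}'$'') that is solved by a greedy assignment in $O(n)$ time.

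The main obstacle, and the reason the hypothesis of $k$-separability is needed, is the well-definedness of "preference at $\vec{x}$'' in the absence of a common state: without it, the direction of a coalitional deviation at a fixed signal profile depends on the coalition, so the inequalities used to derive $\prec_k$ can no longer be read off from individual preferences and Lemma~\ref{lem:inequality} breaks down. Assuming $k$-separability, every joint deviation decomposes into single-coordinate flips whose direction is determined by the sender and the profile alone, and the entire argument of Theorem~\ref{thm:high-level-char} goes through. The strong $k$-resilience case is handled identically using the variant $\prec_k^s$ from Section~\ref{sec:proof-strong}, with the "every sender in $C$'' clause replaced by "at least one sender in $C$'' and the corresponding minor modification of the polynomial-time test.
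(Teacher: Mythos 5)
Your proposal follows essentially the same route as the paper, which itself only sketches this argument: replace ``pure inputs'' by signal profiles in the support of $p$ in the definition of $\prec_k$, invoke $k$-separability so that senders' preferences are well defined independently of the coalition, and then rerun the analogs of Lemma~\ref{lem:inequality}, Proposition~\ref{prop:inequalities} and Lemma~\ref{lem:simplification} together with the polynomial-time pairwise test. Your additional details (restricting the intermediate input to coordinates in $\{x_i, x'_i\}$ and the four-class greedy feasibility check) are a correct filling-in of what the paper leaves implicit, so no changes are needed.
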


Note that Theorem~\ref{thm:high-level-desc-extended} states that $E$ can be computed in polynomial time over the size of the support of signal profiles as opposed to $|X|$, which may be way larger. There is also a generalization of Theorem~\ref{thm:construction} in the extended model.

\begin{theorem}\label{thm:construction-extended}
There exists an algorithm $A$ that receives as input the description of a $k$-separable extended information aggregation game  $\Gamma = (S, A, \Omega, p, u)$, an outcome $o$ for $\Gamma$ implementable by a $k$-resilient  mechanism (resp., strong $k$-resilient mechanism), and a message input $\vec{m}$ for the mediator, and $A$ outputs a value $q \in [0,1]$ such that the function $m_d^*$ defined by $m_d^*(\vec{m}) := A(\Gamma, o, \vec{m})$ determines a $k$-resilient truthful mechanism (resp., strong $k$-resilient truthful mechanism) for $\Gamma$ that implements $o$. Moreover, $A$ runs in polynomial time over the size $m$ of the support of signal profiles and $|S|$.
\end{theorem}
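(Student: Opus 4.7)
The plan is to mirror the proof of Theorem~\ref{thm:construction}, substituting the extended-model partial order $\le_k$ developed in the proof of Theorem~\ref{thm:high-level-desc-extended} for its single-signal analogue. Call an input $\vec{m}$ for the mediator \emph{pure} if $\vec{m}$ coincides with some signal profile $\vec{x}$ in the support of $p$. By $k$-separability, for every pure input $\vec{x}$ and every coalition $C$ with $|C| \le k$, each sender $i \in C$ has a well-defined preferred action at $\vec{x}$, so the relation $\vec{m} \prec_k \vec{m}'$ can be defined exactly as in Section~\ref{sec:proofs} with ``pure'' interpreted in this new sense.

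Given $\Gamma$ and $o$, the algorithm $A(\Gamma, o, \vec{m})$ proceeds as follows. If $\vec{m}$ is pure, return $o^*(\vec{m})$. Otherwise, enumerate the $m$ profiles $(\vec{x})_1, \ldots, (\vec{x})_m$ in the support and, for each one, test whether $(\vec{x})_j \prec_k \vec{m}$ or $\vec{m} \prec_k (\vec{x})_j$; each such test reduces to counting the coordinates on which the two inputs differ and consulting, for each sender in that set, her preference at $(\vec{x})_j$ given by the game description. This partitions the pure inputs into the sets $A^k_\succ(\vec{m})$ and $A^k_\prec(\vec{m})$ used in the proof of Proposition~\ref{prop:inequalities}, and we return the corresponding value, namely $1$ if $A^k_\prec(\vec{m}) = \emptyset$, $0$ if $A^k_\succ(\vec{m}) = \emptyset$, and otherwise the average of the minimum of $m_d^*$ on $A^k_\prec(\vec{m})$ and the maximum on $A^k_\succ(\vec{m})$, exactly as there. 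The loop runs in time $O(mn)$ per call, and the preference lookups are polynomial in the game description, giving the claimed complexity.

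Correctness follows from lifting Lemma~\ref{lem:inequality} and Proposition~\ref{prop:inequalities} to the extended setting. Condition (a) of Theorem~\ref{thm:high-level-desc-extended} guarantees that $o^*$ respects $\le_k$ on pure inputs, so the base assignment $m_d^*(\vec{x}) := o^*(\vec{x})$ is consistent with $\prec_k$, and the extension to non-pure inputs via the min/max construction preserves $\prec_k$-mono\-to\-ni\-ci\-ty by the same argument as in Proposition~\ref{prop:inequalities} (the key observation being that whenever $\vec{m} \prec_k \vec{m}'$ and neither side is pure, the definition of $\prec_k$ forces one of them to be pure, so the recursion depth is effectively one). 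Incentive-compatibility for the receiver is immediate from condition (b) together with the extended-model analogue of Lemma~\ref{ref:lemma-receiver-compatibility}, and the mechanism implements $o$ because it coincides with $o^*$ on pure inputs, which are the only inputs arising under honest play.

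The main obstacle will be verifying the extended-model analogues of Lemma~\ref{lem:inequality} and Proposition~\ref{prop:inequalities}, specifically that $k$-separability is strong enough to make $\prec_k$ a single relation, rather than a coalition-indexed family, that captures all deviation incentives. Concretely, one must show that if a coalition $C$ profitably deviates from honest play at signal profile $\vec{x}$, then the joint preference of $C$ witnessing the profitable deviation is precisely the one captured by the coalition-free preferences used in the definition of $\prec_k$; this is exactly what $k$-separability provides. Once this single lemma is in place, the remaining argument is a bookkeeping exercise that copies Sections~\ref{sec:proofs} and \ref{sec:proof-construction} verbatim.
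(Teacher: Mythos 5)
Your proposal is correct and follows essentially the same route as the paper: the paper proves Theorem~\ref{thm:construction-extended} by replacing ``$\omega$-pure input'' with ``input in the support of $p$'' in the definition of $\prec_k$ (well-defined thanks to $k$-separability), and then reusing the min/max construction of Proposition~\ref{prop:inequalities} and the polynomial-time $\prec_k$ checks exactly as in Theorems~\ref{thm:high-level-char} and~\ref{thm:construction}, which is precisely what you do. Your write-up is in fact more explicit than the paper's brief remark, but it contains no new ideas beyond it and no gaps.
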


The proofs of Theorems~\ref{thm:high-level-desc-extended} and \ref{thm:construction-extended} are analogous to the ones of Theorems~\ref{thm:high-level-char} and \ref{thm:construction} with the following difference. Given two inputs $\vec{m}$ and $\vec{m}'$, we say that $\vec{m} \prec_k \vec{m}'$ if the subset $C$ of senders such that their input differs in $\vec{m}$ and $\vec{m}'$ has size at most $k$, and such that 
\begin{itemize}
    \item[(a)] $\vec{m}$ is in the support of $p$ and all senders in $C$ strictly prefer action $1$ to action $0$ given signal profile $\vec{m}$, or 
    \item[(b)] $\vec{m}'$ is is in the support of $p$ and all senders in $C$ strictly prefer action $0$ to action $1$ given signal profile $\vec{m}'$. 
\end{itemize}
 
 Intuitively, we replace the notion of \emph{pure input} by the condition that the input is in the support of $p$. Note that the assumption of $k$-separability is crucial for this definition, since otherwise the preferences of the players may not be uniquely determined by the signal profile. With this definition, we can construct analogous statements for Lemmas~\ref{lem:inequality}, \ref{lem:simplification} and Proposition~\ref{prop:inequalities}, and proceed identically as in the proofs of Theorems~\ref{thm:high-level-char} and \ref{thm:construction}.

\section{Conclusion}

We provided an efficient characterization of all outcomes implementable by $k$-resilient and strong $k$-resilient truthful mechanisms in information aggregation games. We also gave an efficient construction of the $k$-resilient or strong $k$-resilient mechanism that implements a given implementable outcome. These techniques generalize to the extended model where senders may receive different signals, as long as the senders' preferences are not influenced by their coalition ($k$-separability). It is still an open problem to find if the techniques used in this paper generalize to other notions of coalition resilience as, for instance, the notion in which the sum of utilities of the members of a coalition cannot increase when defecting, or if we can get efficient algorithms in the extended model without the $k$-separability assumption. It is also an open problem to find if we can get similar results in partially synchronous or asynchronous systems in which the messages of the senders are delayed arbitrarily.

\bibliographystyle{eptcs}
\bibliography{bibfile}

\end{document}